\newtheorem{theorem}{Theorem}
\newtheorem{corollary}{Corollary}
\newtheorem{lemma}{Lemma}
\newtheorem{proposition}{Proposition}
\newtheorem{remark}{Remark}
\newenvironment{proof}[1][Proof]{\textbf{#1.} }{\ \rule{0.5em}{0.5em}}
\def\@removefromreset#1#2{\let\@tempb\@elt
     \def\@tempa#1{@&#1}\expandafter\let\csname @*#1*\endcsname\@tempa
     \def\@elt##1{\expandafter\ifx\csname @*##1*\endcsname\@tempa\else
    \noexpand\@elt{##1}\fi}     \expandafter\edef\csname cl@#2\endcsname{\csname cl@#2\endcsname}     \let\@elt\@tempb
     \expandafter\let\csname @*#1*\endcsname\@undefined}
\begin{document}

\title{All joint von Neumann measurements on a quantum state admit a
quasi-classical probability model}
\author{Elena R. Loubenets \\
Moscow State Institute of Electronics and Mathematics, Moscow 109028,
Russia\medskip }
\maketitle

\begin{abstract}
We prove that the Hilbert space description of \emph{all} joint von Neumann
measurements on a quantum state can be reproduced in terms of a \emph{single}
measure space $(%
\Omega
,\mathcal{F}_{\Omega },%
\mu
)$ with a normalized real-valued measure $%
\mu
$, \ that is, in terms of a new general probability model, \emph{the
quasi-classical probability model, }developed in [Loubenets: \emph{J. Math.
Phys}. 53 (2012), 022201; \emph{J. Phys. A: Math. Theor}. 45 (2012),
185306]. In a quasi-classical probability model for all von Neumann
measurements, a random variable models the corresponding quantum observable
in all joint measurements and depends only on this quantum observable. This
mathematical result sheds a new light on some important issues of quantum
randomness discussed in the literature since the seminal article (1935) of
Einstein, Podolsky and Rosen.
\end{abstract}

\tableofcontents

\section{Introduction}

The relation between the quantum probability model and the classical
probability model has been a point of intensive discussions ever since the
seminal publications of von Neumann \cite{1}, Kolmogorov \cite{7} and \
Einstein, Podolsky and Rosen (EPR) \cite{2}.

Though, in the quantum physics literature, one can still find the misleading
claims on a peculiarity of \textquotedblright quantum probabilities" and
\textquotedblright quantum events", the probabilistic description of every
quantum measurement, generalized \cite{9} or projective, satisfies the
Kolmogorov axioms \cite{7} for the theory of probability. For example, the
von Neumann measurement of a quantum observable $X$ in a state $\rho $ on a
complex separable Hilbert space $\mathcal{H}$ is described by the
probability space $(\mathbb{R},\mathcal{B}_{\mathbb{R}},\mathrm{tr}[\rho 
\mathrm{P}_{X}(\cdot )]),$ where $\mathcal{B}_{\mathbb{R}}$ is the Borel $%
\sigma $-algebra on $\mathbb{R}$ and $\mathrm{P}_{X}$ is the spectral
projection-valued measure on $\mathcal{B}_{\mathbb{R}}$ uniquely
corresponding to an observable $X$ on $\mathcal{H}$ due the the spectral
theorem \cite{1, 14}.

However, the Hilbert space description of \emph{all} joint von Neumann
measurements on an arbitrary quantum state cannot be reproduced in terms of
a single probability space. The same concerns the probabilistic description
of an arbitrary quantum multipartite correlation scenario with finite
numbers of settings at each site and, more generally, an arbitrary
nonsignaling multipartite correlation scenario, see introductions in \cite%
{10, 11, 12} and references therein.

Note that, in the quantum theory literature, the interpretation of quantum
measurements via the classical probability model is generally referred to as
a hidden variable (HV) model, and a local hidden variable (LHV) model
constitutes a version of an HV model, where a random variable modelling a
marginal measurement depends only on a setting of this measurement.

Analyzing the probabilistic description of a \emph{general} multipartite
correlation scenario with a finite number of settings at each site, we have
introduced \cite{11} the notion of \emph{a} \emph{local quasi hidden
variable (LqHV) model, }where locality and the measure-theoretic structure
inherent to a local hidden variable (LHV) model are preserved but positivity
of a simulation measure is dropped. We have proved \cite{11} that every
quantum $N$-partite correlation scenario admits an LqHV model.

Developing the LqHV approach further, we showed \cite{12} that a general
correlation scenario admits an LqHV model (i) if and only if it is
nonsignaling \cite{10} and (ii) if and only if it admits a deterministic
LqHV model. In the latter particular type of an LqHV model, all joint
probability distributions of a correlation scenario are reproduced in terms
of a single measure space $(\Omega ,\mathcal{F}_{\Omega },\mu )$ with a
normalized bounded measure $\mu $ and a set of random variables each
depending only on a setting of the corresponding modelled marginal
measurement.

As we have argued in \cite{12}, these new results point to the existence of
a new general probability model, \emph{the quasi-classical probability
model, }that has the measure-theoretic structure $(\Omega ,\mathcal{F}%
_{\Omega },\nu )$ resembling the structure of the classical probability
model but reduces to the latter iff a normalized real-valued measure $\nu $
is positive.

In the present paper, we prove that the Hilbert space description of \emph{%
all joint von Neumann measurements} on a quantum system can be reproduced in
terms of a \emph{single} space $(\Omega ,\mathcal{F}_{\Omega })$ via a set
of normalized real-valued measures, each uniquely corresponding to a
modelled quantum state, and a set of random variables, each modelling the
corresponding quantum observable in all joint von Neumann measurements and
depending only on this quantum observable.

This result, in particular, means that the probabilistic description of all
joint von Neumann measurements on a quantum state admits the quasi-classical
probability model introduced in \cite{12} and, in this model, a random
variable modelling a quantum observable in all joint von Neumann
measurements is determined only by this quantum observable.

The paper is organized as follows.

In section 2, we recall the von Neumann formalism for the description of
ideal quantum measurements and the notion of the spectral projection-valued
measure of a quantum observable.

In section 3, we introduce for a finite number of quantum observables the
symmetrized product of their spectral measures and discuss properties of
this product operator-valued measure.

In section 4, we generalize some items of the Kolmogorov extension theorem 
\cite{7} to the case of consistent operator-valued measures and specify this
generalization for the consistent product measures introduced in section 3.

In section 5, we prove that the Hilbert space description of all joint von
Neumann measurements on a quantum system can be reproduced in terms of
random variables and normalized real-valued measures defined on a single
measurable space.

In section 6, we summarize the main mathematical results of the present
paper and discuss their conceptual implications.

\section{Von Neumann measurements}

In the frame of the von Neumann approach \cite{1} , states and observables
of a quantum system are described, correspondingly, by density operators $%
\rho $ and self-adjoint linear operators $X,$ bounded or unbounded, on a
complex separable Hilbert space $\mathcal{H}$.

Denote by $\mathfrak{X}_{\mathcal{H}}$ the set of all self-adjoint linear
operators, bounded and unbounded, on $\mathcal{H}$. Let $\mathcal{L}_{%
\mathcal{H}}$ be the vector space of all bounded linear operators on $%
\mathcal{H}$ and $\mathcal{L}_{\mathcal{H}}^{(s)}$\ -- the vector space of
all self-adjoint bounded linear operators on $\mathcal{H}.$ Equipped with
the operator norm, these vector spaces are Banach.

The probability that, under an ideal (errorless) measurement of a quantum
observable $X\in \mathfrak{X}_{\mathcal{H}}$ in a state $\rho ,$ an observed
value belongs to a Borel subset $B$ of $\mathbb{R}$ is given \cite{1, 9, 14}
by the expression%
\begin{equation}
\mathrm{tr}[\rho \mathrm{P}_{X}(B)],\text{ \ \ }B\in \mathcal{B}_{\mathbb{R}%
},  \label{1}
\end{equation}%
where $\mathcal{B}_{\mathbb{R}}$ is the $\sigma $-algebra \cite{13} of Borel
subsets of $\mathbb{R}$ and $\mathrm{P}_{X}$ is the spectral
projection-valued measure of a quantum observable $X$ -- that is, the
measure $\mathrm{P}_{X}$ on $\mathcal{B}_{\mathbb{R}}$ uniquely
corresponding to $X\in \mathfrak{X}_{\mathcal{H}}\ $due to the spectral
theorem \cite{1, 14, 15} 
\begin{equation}
X=\dint\limits_{\mathbb{R}}\lambda \mathrm{P}_{X}(\mathrm{d}\lambda )
\label{2}
\end{equation}%
and with values $\mathrm{P}_{X}(B),$ $\forall B\in \mathcal{B}_{\mathbb{R}},$
$\mathrm{P}_{X}(\mathbb{R})=\mathbb{I}_{\mathcal{H}},$ that are projections
on $\mathcal{H}$ satisfying the relations 
\begin{eqnarray}
\mathrm{P}_{X}(B_{1})\mathrm{P}_{X}(B_{2}) &=&\mathrm{P}_{X}(B_{2})\mathrm{P}%
_{X}(B_{1})=\mathrm{P}_{X}(B_{1}\cap B_{2}),\text{ \ \ }B_{1},B_{2}\in 
\mathcal{B}_{\mathbb{R}},  \label{3} \\
\mathrm{P}_{X}(B) &=&0,\text{\ \ iff \ }B\in \mathcal{B}_{\mathbb{R}}\cap (%
\mathbb{R}/\mathrm{sp}X),  \notag
\end{eqnarray}%
where the spectrum $\mathrm{sp}X$ of an observable $X\in \mathfrak{X}_{%
\mathcal{H}}$ constitutes a closed Borel subset of $\mathbb{R}.$

An ideal measurement (\ref{1}) of a quantum observable $X$ in a state $\rho $
is generally referred to as the von Neumann measurement.

The measure $\mathrm{P}_{X}$ is $\sigma $-additive in the strong operator
topology \cite{9, 14, 15} in $\mathcal{L}_{\mathcal{H}}^{(s)}$, that is:%
\begin{equation}
\lim_{n\rightarrow \infty }\left\Vert \mathrm{P}_{X}(\cup _{i=1}^{\infty
}B_{i})\psi -\sum_{i=1}^{n}\mathrm{P}_{X}(B_{i})\psi \right\Vert _{\mathcal{H%
}}=0  \label{4}
\end{equation}%
for all $\psi \in \mathcal{H}$ and all countable collections $\{B_{i}\}$ of
mutually disjoint sets in $\mathcal{B}_{\mathbb{R}}.$

\begin{remark}
In this article, we follow the terminology of Ref. \cite{16}. Namely, let $%
\mathfrak{B}$ be a Banach space and $\mathcal{F}_{\Lambda }$ be an algebra
of subsets of a set $\Lambda .$ We refer to an additive set function $%
\mathfrak{m}:\mathcal{F}_{\Lambda }\rightarrow \mathfrak{B}$ as a $\mathfrak{%
B}$-valued (finitely additive) measure on $\mathcal{F}_{\Lambda }$. If a
measure $\mathfrak{m}$ on $\mathcal{F}_{\Lambda }$ is $\sigma $-additive in
some topology on $\mathfrak{B},$ then we specify this in addition.
\end{remark}

From (\ref{3}) it follows that, for each $X\in \mathfrak{X}_{\mathcal{H}},$
the measure $\mathrm{P}_{X}(B)\neq 0$ if and only if a set $B\neq
\varnothing $ belongs to the trace $\sigma $-algebra 
\begin{equation}
\mathcal{B}_{\mathrm{sp}X}:=\mathcal{B}_{\mathbb{R}}\cap \mathrm{sp}X.
\end{equation}%
Therefore, we further consider the spectral projection-valued measure $%
\mathrm{P}_{X}$ only on $\mathcal{B}_{\mathrm{sp}X}.$

The joint von Neumann measurement of several quantum observables $%
X_{1},...,X_{n}\in \mathfrak{X}_{\mathcal{H}}$ is possible \cite{1, 9, 14}
iff all values of their spectral measures mutually commute, that is, 
\begin{equation}
\lbrack \mathrm{P}_{X_{i_{1}}}(B_{i_{1}}),\mathrm{P}%
_{X_{i_{2}}}(B_{i_{2}})]=0,\text{ \ \ }B_{i}\in \mathcal{B}_{\mathrm{sp}%
X_{i}},\text{ \ \ }i=1,...,n.  \label{com}
\end{equation}

For bounded quantum observables $X_{1},...,X_{n}\in \mathfrak{X}_{\mathcal{H}%
}$, condition (\ref{com}) is equivalent to mutual commutativity $%
[X_{i_{1}},X_{i_{2}}]=0,$ $i=1,...,n,$ of these observables. Therefore, for
short, we further refer to arbitrary quantum observables $X_{1},...,X_{n}\in 
\mathfrak{X}_{\mathcal{H}}$, for which the spectral measures satisfy
condition (\ref{com}), as \emph{mutually commuting }in the sense of
condition (\ref{com}).

The joint von Neumann measurement of \emph{mutually commuting} quantum
observables $X_{1},...,X_{n}\in \mathfrak{X}_{\mathcal{H}}$ is described 
\cite{1, 9, 14} by the normalized projection-valued measure 
\begin{equation}
\dint\limits_{(\lambda _{1},...,\lambda _{n})\in B}\mathrm{P}_{X_{1}}(%
\mathrm{d}\lambda _{1})\cdot ...\cdot \mathrm{P}_{X_{n}}(\mathrm{d}\lambda
_{n}),\text{ \ \ \ \ }B\in \mathcal{B}_{\mathrm{sp}X_{1}\times \cdots \times 
\mathrm{sp}X_{n}},  \label{5}
\end{equation}%
on the trace Borel $\sigma $-algebra 
\begin{equation}
\mathcal{B}_{\mathrm{sp}X_{1}\times \cdots \times \mathrm{sp}X_{n}}:=%
\mathcal{B}_{\mathbb{R}^{n}}\cap (\mathrm{sp}X_{1}\times \cdots \times 
\mathrm{sp}X_{n}).  \label{a}
\end{equation}%
This measure is \cite{15} $\sigma $-additive in the strong operator topology
on $\mathcal{L}_{\mathcal{H}}^{(s)}$.

The expression 
\begin{equation}
\mathrm{tr}[\rho \{\mathrm{P}_{X_{1}}(B_{1})\cdot ...\cdot \mathrm{P}%
_{X_{n}}(B_{n})\}]  \label{6}
\end{equation}%
gives the probability that, under the joint von Neumann measurement of
mutually commuting quantum observables $X_{1},...,X_{n}\in \mathfrak{X}_{%
\mathcal{H}}$ in a state $\rho $, these observables take values in sets $%
B_{1}\in \mathcal{B}_{\mathrm{sp}X_{1}},...,B_{n}\in \mathcal{B}_{\mathrm{sp}%
X_{n}},$ respectively.

\section{Symmetrized products of spectral measures}

For an $n$-tuple $(X_{1},...,X_{n})$ of arbitrary mutually non-equal
observables $X_{1},...,X_{n}$ $\in \mathfrak{X}_{\mathcal{H}},$ consider on
the set $\mathrm{sp}X_{1}\times \cdots \times \mathrm{sp}X_{n}\subseteq 
\mathbb{R}^{n}$ the algebra $\mathcal{F}_{\mathrm{sp}X_{1}\times \cdots
\times \mathrm{sp}X_{n}}$, \emph{the product algebra}, generated by all
rectangles $B_{1}\times \cdots \times B_{n}\subseteq \mathrm{sp}X_{1}\times
\cdots \times \mathrm{sp}X_{n}$ with measurable sides $B_{i}\in \mathcal{B}_{%
\mathrm{sp}X_{i}}.$

Let 
\begin{equation}
\mathcal{P}_{(X_{1},...,X_{n})}:\mathcal{F}_{\mathrm{sp}X_{1}\times \cdots
\times \mathrm{sp}X_{n}}\rightarrow \mathcal{L}_{\mathcal{H}}^{(s)}
\label{me}
\end{equation}%
be the normalized finitely additive $\mathcal{L}_{\mathcal{H}}^{(s)}$-valued
measure defined \emph{uniquely} on $\mathcal{F}_{\mathrm{sp}X_{1}\times
\cdots \times \mathrm{sp}X_{n}}$ via the relation 
\begin{equation}
\mathcal{P}_{(X_{1},...,X_{n})}(B_{1}\times \cdots \times B_{n})=\frac{1}{n!}%
\left\{ \mathrm{P}_{X_{1}}(B_{1})\cdot \ldots \cdot \mathrm{P}%
_{X_{n}}(B_{n})\right\} _{\mathrm{sym}}  \label{9_}
\end{equation}%
on all rectangles $B_{1}\times \cdots \times B_{n}$ with sides $B_{i}\in 
\mathcal{B}_{\mathrm{sp}X_{i}}.$ Here, the notation $\{Z_{1}\cdot \ldots
\cdot Z_{n}\}_{\mathrm{sym}}$ means the sum constituting the symmetrization
of the operator product $Z_{1}\cdot \ldots \cdot Z_{n},$ $Z_{i}\in \mathcal{L%
}_{\mathcal{H}}^{(s)},$ with respect to all permutations of its
factors.\medskip

If each observable $X_{i}$ in a collection $\{X_{1},...,X_{n}\}\subset 
\mathfrak{X}_{\mathcal{H}}$ is bounded (i.e. $X_{i}\in \mathcal{L}_{\mathcal{%
H}}^{(s)}$) and has only a discrete spectrum $\mathrm{sp}X_{i}=\{\lambda
_{X_{i}}^{(k)}\in \mathbb{R},$ $k=1,...,K_{X_{i}}<\infty \},$ where $\lambda
_{X_{i}}^{(k)}$ are eigenvalues of $X_{i},$ then the product algebra $%
\mathcal{F}_{\mathrm{sp}X_{1}\times \cdots \times \mathrm{sp}X_{n}}$ is
finite and coincides with the Borel algebra $\mathcal{B}_{\mathrm{sp}%
X_{1}\times \cdots \times \mathrm{sp}X_{n}}$, and the finitely additive
measure $\mathcal{P}_{(X_{1},...,X_{n})}$ has the form\footnote{%
Here, the generally accepted notation $\sum_{\lambda \in B}Z(\lambda )$
means the sum $\sum_{\lambda }\chi _{B}(\lambda )Z(\lambda ),$ where $\chi
_{B}(\cdot )$ is the indicator function of a set $B,$ that is, $\chi
_{B}(\lambda )=1$ for $\lambda \in B$ and $\chi _{B}(\lambda )=0$ for $%
\lambda \notin B.$} 
\begin{equation}
\mathcal{P}_{(X_{1},...,X_{n})}(F):=\frac{1}{n!}\dsum\limits_{(\lambda
_{X_{1}},...,\lambda _{X_{n}})\in F}\left\{ \text{ }\mathrm{P}%
_{X_{1}}(\{\lambda _{X_{1}}\})\cdot \ldots \cdot \mathrm{P}%
_{X_{n}}(\{\lambda _{X_{n}}\})\right\} _{\mathrm{sym}}  \label{9}
\end{equation}%
for all $F\in \mathcal{F}_{\mathrm{sp}X_{1}\times \cdots \times \mathrm{sp}%
X_{n}}$.

For quantum observables $X_{1},...,X_{n}\in \mathfrak{X}_{\mathcal{H}}$,
which mutually commute in the sense of relation (\ref{com}), the measure $%
\mathcal{P}_{(X_{1},...,X_{n})}$ is projection-valued and $\left\Vert 
\mathcal{P}_{(X_{1},...,X_{n})}(F)\right\Vert $ $=1$ for each $\varnothing
\neq F\in \mathcal{F}_{\mathrm{sp}X_{1}\times \cdots \times \mathrm{sp}%
X_{n}} $

Consider the family%
\begin{equation}
\{\mathcal{P}_{(X_{1},...,X_{n})}:\mathcal{F}_{\mathrm{sp}X_{1}\times \cdots
\times \mathrm{sp}X_{n}}\rightarrow \mathcal{L}_{\mathcal{H}}^{(s)}\mid
\{X_{1},...,X_{n}\}\subset \mathfrak{X}_{\mathcal{H}},\text{ }n\in \mathbb{N}%
\}  \label{12}
\end{equation}%
of all normalized finitely additive $\mathcal{L}_{\mathcal{H}}^{(s)}$-valued
measures (\ref{me}). These measures satisfy the following relations proved
in appendix A.

\begin{lemma}
For an arbitrary finite collection $\{X_{1},...,X_{n}\}$ $\subset \mathfrak{X%
}_{\mathcal{H}}$ of quantum observables on $\mathcal{H},$ 
\begin{eqnarray}
\mathcal{P}_{(X_{1},...,X_{n})}(B_{1}\times \cdots \times B_{n}) &=&\mathcal{%
P}_{(X_{i_{1}},...,X_{i_{_{n}}})}(B_{i_{1}}\times \cdots \times B_{i_{n}}),
\label{13} \\
B_{i} &\in &\mathcal{B}_{\mathrm{sp}X_{i}},\text{ \ \ }i=1,...,n,  \notag
\end{eqnarray}%
for all permutations $\binom{1,,...,n}{i_{1},...,i_{n}}$ and 
\begin{eqnarray}
&&\mathcal{P}_{(X_{1},...,X_{_{n}})}\left( \{(x_{1},...,x_{n})\in \mathrm{sp}%
X_{1}\times \cdots \times \mathrm{sp}X_{n}\mid (x_{i_{1}},...,x_{i_{k}})\in
F\}\right)  \label{14} \\
&=&\mathcal{P}_{(X_{i_{1}},...,X_{i_{k}})}\left( F\right) ,\text{ \ \ \ }%
F\in \mathcal{F}_{\mathrm{sp}X_{i_{1}}\times \cdots \times \mathrm{sp}%
X_{i_{k}}},  \notag
\end{eqnarray}%
for each subset $\{X_{i_{1}},...,X_{i_{k}}\}\subseteq
\{X_{1},...,X_{n}\}.\medskip $
\end{lemma}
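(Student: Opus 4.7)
The plan is to verify both identities directly from the definition (\ref{9_}), with a small combinatorial lemma about symmetrized products containing identity factors as the only nontrivial ingredient.

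For (\ref{13}), I would simply read off the definition. By construction the symmetrization $\{Z_1\cdots Z_n\}_{\mathrm{sym}}$ is invariant under any permutation of its factors, so $\{\mathrm{P}_{X_1}(B_1)\cdots \mathrm{P}_{X_n}(B_n)\}_{\mathrm{sym}}$ depends only on the multiset of pairs $(X_i,B_i)$ and not on their listing order. Dividing by $n!$ on both sides gives (\ref{13}) on the nose. This is the entire content of (\ref{13}), since the statement there is phrased on rectangles.

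For (\ref{14}), the plan has three steps. First, I would reduce to $F$ a rectangle $B_{i_1}\times\cdots\times B_{i_k}$ with $B_{i_j}\in\mathcal{B}_{\mathrm{sp}X_{i_j}}$: the cylinder operation on $F$ commutes with finite disjoint unions, both sides of (\ref{14}) are finitely additive in $F$, and rectangles generate $\mathcal{F}_{\mathrm{sp}X_{i_1}\times\cdots\times\mathrm{sp}X_{i_k}}$, so the equality extends uniquely from rectangles to the whole product algebra. Second, on a rectangle the cylinder set is itself a rectangle $B_1'\times\cdots\times B_n'$ with $B_{i_j}'=B_{i_j}$ and $B_j'=\mathrm{sp}X_j$ for $j\notin\{i_1,\ldots,i_k\}$; since $\mathrm{P}_{X_j}(\mathrm{sp}X_j)=\mathbb{I}_{\mathcal{H}}$, the left-hand side of (\ref{14}) becomes $\tfrac{1}{n!}\{Z_1\cdots Z_n\}_{\mathrm{sym}}$ in which exactly $n-k$ of the factors are identities and the remaining $k$ are the projections $\mathrm{P}_{X_{i_j}}(B_{i_j})$.

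The third step, and the only genuinely substantive one, is the combinatorial observation that inserting $n-k$ identity factors multiplies a symmetrized product by $n!/k!$. For each of the $k!$ relative orderings of the non-identity factors, the number of $\sigma\in S_n$ realizing that ordering is $\binom{n}{k}(n-k)!=n!/k!$, because one chooses the $k$ slots occupied by non-identity factors in $\binom{n}{k}$ ways, places them in the prescribed order, and freely permutes the $n-k$ identities in the remaining slots. Summing and dividing by $n!$ gives $\tfrac{1}{k!}\{\mathrm{P}_{X_{i_1}}(B_{i_1})\cdots\mathrm{P}_{X_{i_k}}(B_{i_k})\}_{\mathrm{sym}}$, which is $\mathcal{P}_{(X_{i_1},\ldots,X_{i_k})}(B_{i_1}\times\cdots\times B_{i_k})$ by definition. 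I expect the main obstacle, such as it is, to be just bookkeeping in this counting; notably nothing in the argument uses the commutativity condition (\ref{com}).
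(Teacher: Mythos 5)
Your proposal is correct and follows essentially the same route as the paper's Appendix A: relation (\ref{13}) is read off from the permutation invariance of the symmetrization in (\ref{9_}), and relation (\ref{14}) is reduced by finite additivity to rectangles, where the cylinder becomes a rectangle with sides $\mathrm{sp}X_{j}$ (hence identity factors via $\mathrm{P}_{X_{j}}(\mathrm{sp}X_{j})=\mathbb{I}_{\mathcal{H}}$) and the factor $n!/k!$ emerges from symmetrizing over the inserted identities. The only difference is that you spell out the counting argument for that last combinatorial step, which the paper states without justification; this is a welcome addition, not a deviation.
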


Relations (\ref{13}), (\ref{14}) on operator-valued measures $\mathcal{P}%
_{(X_{1},...,X_{_{n}})}$ are quite similar by their form to the Kolmogorov
consistency conditions \cite{7, 17} for a family 
\begin{equation}
\{\mu _{(t_{1},...,t_{n})}:\mathcal{B}_{\mathbb{R}^{n}}\rightarrow \lbrack
0,1]\text{ }\mid \{t_{1},...,t_{n}\}\subset T,\ \ n\in \mathbb{N}\}
\label{15}
\end{equation}%
of probability measures $\mu _{(t_{1},...,t_{n})}$, each specified by a
tuple $(t_{1},...,t_{n})$ of mutually non-equal elements in an index set $T.$

In view of this similarity, for our further consideration in section 5, we
proceed to generalize to the case of consistent operator-valued measures
some items of the Kolmogorov theorem \cite{7} on the extension to $(\mathbb{R%
}^{T},\mathcal{F}_{\mathbb{R}^{T}})$ of consistent probability measures (\ref%
{15}).

\begin{remark}
Notations $\mathbb{R}^{T}$ and $\mathcal{F}_{\mathbb{R}^{T}}$ mean \cite{7,
17}, correspondingly, the set of all real-valued functions $x:T\rightarrow 
\mathbb{R}$ and the algebra generated on $\mathbb{R}^{T}$ by all cylindrical
subsets of the form $\left\{ x\in \mathbb{R}^{T}\mid
(x(t_{_{1}}),...,x(t_{n}))\in B\right\} ,$ where $B\in \mathcal{B}_{\mathbb{R%
}^{n}},$ $\{t_{1},...,t_{n}\}$ $\subset T,$ $n\in \mathbb{N}.$
\end{remark}

\section{The extension theorem}

For an uncountable index set $T,$ consider a family $\{(\Lambda _{t},%
\mathcal{F}_{\Lambda _{t}}),$ $t\in T\},$ where each $\Lambda _{t}$ is a
non-empty set and $\mathcal{F}_{\Lambda _{t}}$ is an algebra of subsets of $%
\Lambda _{t}$. Let $\mathcal{F}_{\Lambda _{t_{1}}\times \cdots \times
\Lambda _{t_{n}}}$ be the algebra on $\Lambda _{t_{1}}\times \cdots \times
\Lambda _{t_{n}},$ \emph{the product algebra}, generated by all rectangles $%
F_{1}\times \cdots \times F_{n}\subseteq \Lambda _{t_{1}}\times \cdots
\times \Lambda _{t_{n}}$ with sides $F_{k}\in \mathcal{F}_{\Lambda
_{t_{k}}}. $

Denote by $\Lambda :=\dprod\limits_{t\in T}\Lambda _{t}$ the Cartesian
product \cite{13} of all sets $\Lambda _{t},$ $t\in T$. That is, $\Lambda $
is the collection of all functions $\lambda :T\rightarrow $ $\cup _{t\in
T}\Lambda _{t}$ with values $\lambda _{t}:=$ $\lambda (t)\in \Lambda _{t}.$

The set of all cylindrical subsets of $\Lambda $ of the form 
\begin{eqnarray}
\mathcal{J}_{(t_{1},...,t_{n})}(F) &:&=\left\{ \lambda \in \Lambda \mid
(\lambda _{t_{1}},...,\lambda _{t_{n}})\in F\right\} ,  \label{16} \\
F &\in &\mathcal{F}_{\Lambda _{t_{1}}\times \cdots \times \Lambda _{t_{n}}},%
\text{\ \ \ }\{t_{1},...,t_{n}\}\subset T,\text{\ \ \ }n\in \mathbb{N}, 
\notag
\end{eqnarray}%
constitutes \cite{13} an algebra on $\Lambda $ that we further denote by $%
\mathcal{A}_{\Lambda }.$

Since $\mathcal{J}_{(t_{1},...,t_{n})}(F)\equiv \pi
_{(t_{1},...,t_{n})}^{-1}(F),$ where the function $\pi
_{(t_{1},...,t_{n})}:\Lambda \rightarrow $ $\Lambda _{t_{1}}\times \cdots
\times \Lambda _{t_{n}}$ is the canonical projection on $\Lambda $ defined
by the relations 
\begin{eqnarray}
\pi _{(t_{1},...,t_{n})}(\lambda ) &:&=(\pi _{t_{1}}(\lambda ),...,\pi
_{t_{n}}(\lambda ))\in \Lambda _{t_{1}}\times \cdots \times \Lambda _{t_{n}},
\label{17} \\
\pi _{t}(\lambda ) &:&=\lambda _{t}\in \Lambda _{t},  \notag
\end{eqnarray}%
we have 
\begin{equation}
\mathcal{A}_{\Lambda }=\{\pi _{(t_{1},...,t_{n})}^{-1}(F)\subseteq \Lambda 
\text{ }\mid \text{ }F\in \mathcal{F}_{\Lambda _{t_{1}}\times \cdots \times
\Lambda _{t_{n}}},\text{\ \ }\{t_{1},...,t_{n}\}\subset T,\text{\ \ }n\in 
\mathbb{N}\}.  \label{18}
\end{equation}%
\smallskip

Introduce a family 
\begin{equation}
\{\mathfrak{M}_{(t_{1},...,t_{n})}:\mathcal{F}_{\Lambda _{t_{1}}\times
\cdots \times \Lambda _{t_{n}}}\rightarrow \mathcal{L}_{\mathcal{H}}\mid 
\mathfrak{M}_{(t_{1},...,t_{n})}(\Lambda _{t_{1}}\times \cdots \times
\Lambda _{t_{n}})=\mathbb{I}_{\mathcal{H}},\text{ \ \ }\{t_{1},...,t_{n}\}%
\subset T,\text{ \ \ }n\in \mathbb{N}\}  \label{18_1}
\end{equation}%
of normalized finitely additive measures $\mathfrak{M}_{(t_{1},...,t_{n})}$,
each specified by mutually non-equal indices $t_{1},...,t_{n}$ $\in T$ and
having values that are bounded linear operators on $\mathcal{H}.$

Let, for each finite index collection $\{t_{1},...,t_{n}\}\subset T,$ these
measures satisfy the consistency condition 
\begin{eqnarray}
\mathfrak{M}_{(t_{1},...,t_{n})}(F_{1}\times \cdots \times F_{n}) &=&%
\mathfrak{M}_{(t_{i_{1}},...,t_{i_{_{n}}})}(F_{i_{1}}\times \cdots \times
F_{i_{n}}),  \label{13_} \\
F_{i} &\in &\mathcal{F}_{\Lambda _{t_{i}}},\text{ \ \ }i=1,...,n,  \notag
\end{eqnarray}%
for all permutations $\binom{1,,...,n}{i_{1},...,i_{n}}$ and the consistency
condition 
\begin{eqnarray}
&&\mathfrak{M}_{(t_{1},...,t_{_{n}})}\left( \{(\lambda _{1},...,\lambda
_{n})\in \Lambda _{t_{1}}\times \cdots \times \Lambda _{t_{n}}\mid (\lambda
_{i_{1}},...,\lambda _{i_{k}})\in F\}\right)  \label{14_} \\
&=&\mathfrak{M}_{(t_{i_{1}},...,t_{i_{k}})}\left( F\right) ,\text{ \ \ \ \ \ 
}F\in \mathcal{F}_{\Lambda _{t_{i_{1}}}\times \cdots \times \Lambda
_{t_{i_{_{k}}}}},  \notag
\end{eqnarray}%
for each $\{t_{i_{1}},...,t_{i_{k}}\}\subseteq
\{t_{1},...,t_{n}\}.\smallskip $

The following statement is proved in appendix B and constitutes a
generalization to the case of consistent operator-valued measures of some
items of the Kolmogorov consistency theorem \cite{7, 17} for probability
measures (\ref{15}).

\begin{lemma}
For a family (\ref{18_1}) of normalized finitely additive $\mathcal{L}_{%
\mathcal{H}}$-valued measures $\mathfrak{M}_{(t_{i_{1}},...,t_{i_{n}})}$
satisfying the consistency conditions (\ref{13_}) (\ref{14_}), there exists
a unique normalized finitely additive $\mathcal{L}_{\mathcal{H}}$-valued
measure 
\begin{equation}
\mathbb{M}:\mathcal{A}_{\Lambda }\rightarrow \mathcal{L}_{\mathcal{H}},\ \ \
\ \mathbb{M(}\Lambda )=\mathbb{I}_{\mathcal{H}},  \label{19_}
\end{equation}%
such that 
\begin{equation}
\mathbb{M}\left( \pi _{(t_{1},...,t_{n})}^{-1}(F)\right) =\mathfrak{M}%
_{(t_{1},...,t_{_{n}})}(F)\text{\ }  \label{20_}
\end{equation}%
for all sets $F\in \mathcal{F}_{\Lambda _{t_{1}}\times \cdots \times \Lambda
_{t_{n}}}$ and an arbitrary finite index collection $\{t_{1},...,t_{n}\}%
\subset T.$\medskip
\end{lemma}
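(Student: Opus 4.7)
The plan is to mimic the classical Kolmogorov construction on the cylinder algebra $\mathcal{A}_\Lambda$, keeping in mind that we only need finite additivity (no $\sigma$-additive extension is claimed), so all arguments stay purely algebraic and no limits have to be taken in any operator topology.

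First, I would define $\mathbb{M}$ on a cylinder $C=\pi_{(t_{1},\ldots,t_{n})}^{-1}(F)$ by $\mathbb{M}(C):=\mathfrak{M}_{(t_{1},\ldots,t_{n})}(F)$. The main obstacle is that the same cylinder has many representations: different orderings of the indices $t_{i}$, and any enlargement of the index tuple by an additional index $s\in T$ (replacing $F$ by $F\times \Lambda_{s}$) yields the same set $C$. Well-definedness therefore reduces to checking two reductions. If $(t_{i_{1}},\ldots,t_{i_{n}})$ is a permutation of $(t_{1},\ldots,t_{n})$ and $F$ is a rectangle, (\ref{13_}) gives the equality directly, and the general case follows because the rectangles generate $\mathcal{F}_{\Lambda_{t_{1}}\times\cdots\times\Lambda_{t_{n}}}$ and both sides of the asserted equality are finitely additive in $F$. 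If two representations use different index tuples $(t_{1},\ldots,t_{n})$ and $(s_{1},\ldots,s_{m})$, I take the union index tuple $(u_{1},\ldots,u_{N})$ and use (\ref{14_}) to rewrite each representation in terms of $\mathfrak{M}_{(u_{1},\ldots,u_{N})}$; since both produce cylinders over the same index set that must coincide as sets, the corresponding $\mathfrak{M}_{(u_{1},\ldots,u_{N})}$-values coincide.

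Next, I would verify that $\mathbb{M}$ is finitely additive on $\mathcal{A}_\Lambda$. Given two disjoint cylinders $C_{1}=\pi_{(t_{1},\ldots,t_{n})}^{-1}(F_{1})$ and $C_{2}=\pi_{(s_{1},\ldots,s_{m})}^{-1}(F_{2})$, I use the just-established lifting procedure to rewrite both over the common index tuple $(u_{1},\ldots,u_{N})$ as $C_{j}=\pi_{(u_{1},\ldots,u_{N})}^{-1}(\widetilde F_{j})$ with $\widetilde F_{1},\widetilde F_{2}\in\mathcal{F}_{\Lambda_{u_{1}}\times\cdots\times\Lambda_{u_{N}}}$ disjoint (they are disjoint because $\pi_{(u_{1},\ldots,u_{N})}$ is surjective in each coordinate). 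Finite additivity of $\mathfrak{M}_{(u_{1},\ldots,u_{N})}$ on its product algebra then yields
\begin{equation*}
\mathbb{M}(C_{1}\cup C_{2})=\mathfrak{M}_{(u_{1},\ldots,u_{N})}(\widetilde F_{1}\cup\widetilde F_{2})=\mathfrak{M}_{(u_{1},\ldots,u_{N})}(\widetilde F_{1})+\mathfrak{M}_{(u_{1},\ldots,u_{N})}(\widetilde F_{2})=\mathbb{M}(C_{1})+\mathbb{M}(C_{2}).
\end{equation*}
Normalization is immediate: for any single index $t\in T$, $\Lambda=\pi_{t}^{-1}(\Lambda_{t})$, and $\mathfrak{M}_{(t)}(\Lambda_{t})=\mathbb{I}_{\mathcal{H}}$ by (\ref{18_1}).

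Uniqueness is forced by construction: any normalized finitely additive $\mathcal{L}_{\mathcal{H}}$-valued measure on $\mathcal{A}_\Lambda$ satisfying (\ref{20_}) agrees with $\mathbb{M}$ on every cylinder, and since $\mathcal{A}_\Lambda$ is exactly the cylinder algebra (\ref{18}), two finitely additive measures that coincide on a generating algebra coincide on it. The whole proof is thus the standard extension argument with "probability measure" replaced by "normalized $\mathcal{L}_{\mathcal{H}}$-valued finitely additive measure"; the key point is that none of the steps use positivity or scalar-valuedness, only linearity of equality in $\mathcal{L}_{\mathcal{H}}$ together with the permutation and marginalization conditions (\ref{13_}) and (\ref{14_}). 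I expect the principal work in the write-up to lie in a clean formulation of the well-definedness step, since the cylinder algebra $\mathcal{A}_\Lambda$ admits many overlapping representations of the same set.
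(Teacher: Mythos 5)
Your proposal is correct and follows essentially the same route as the paper's Appendix B: define $\mathbb{M}$ on cylinders via (\ref{B1}), establish well-definedness and finite additivity by lifting any two representations (or any two disjoint cylinders) to the union index tuple and invoking the consistency conditions (\ref{13_}), (\ref{14_}) together with finite additivity of each $\mathfrak{M}_{(t_{1},\ldots,t_{n})}$, and conclude normalization and uniqueness immediately. The only cosmetic difference is that you split well-definedness into a permutation case and an index-enlargement case, while the paper handles both at once through the union tuple in (\ref{B2})--(\ref{B5}).
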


Note that family (\ref{12}) of the product measures $\mathcal{P}%
_{(X_{1},...,X_{n})}$ represents a particular example of a family (\ref{18_1}%
) if, in the latter, we replace 
\begin{eqnarray}
T &\rightarrow &\mathfrak{X}_{\mathcal{H}},\text{ \ \ }\Lambda
_{t}\rightarrow \mathrm{sp}X, \\
\mathcal{F}_{t} &\rightarrow &\mathcal{B}_{\mathrm{sp}X},\text{ \ \ \ }%
\mathcal{F}_{\Lambda _{t_{1}}\times \cdots \times \Lambda
_{t_{n}}}\rightarrow \mathcal{F}_{\mathrm{sp}X_{1}\times \cdots \times 
\mathrm{sp}X_{n}}.  \notag
\end{eqnarray}%
Moreover, in view of lemma 1, all measures $\mathcal{P}_{(X_{1},...,X_{n})}$
satisfy the consistency conditions (\ref{13_}), (\ref{14_}).

Therefore, similarly to our notations in lemma 2, we denote by let $%
\widetilde{\Lambda }:=\dprod\limits_{X\in \mathfrak{X}_{\mathcal{H}}}\mathrm{%
sp}X$ the set of all real-valued functions $\widetilde{\lambda }:\mathfrak{X}%
_{\mathcal{H}}\rightarrow \cup _{X\in \mathfrak{X}_{\mathcal{H}}}\mathrm{sp}%
X $ with values $\widetilde{\lambda }_{X}:=\widetilde{\lambda }(X)\in 
\mathrm{sp}X$.

Let 
\begin{eqnarray}
\pi _{(X_{1},...,X_{n})}(\widetilde{\lambda }) &:&=(\pi _{X_{1}}(\widetilde{%
\lambda }),...,\pi _{X_{n}}(\widetilde{\lambda }))\in \mathrm{sp}X_{1}\times
\cdots \times \mathrm{sp}X_{n}\subseteq \mathbb{R}^{n},  \label{17_} \\
\pi _{X}(\widetilde{\lambda }) &:&=\widetilde{\lambda }_{X}\in \mathrm{sp}X.
\notag
\end{eqnarray}%
be the canonical projection $\widetilde{\Lambda }\rightarrow \mathrm{sp}%
X_{1}\times \cdots \times \mathrm{sp}X_{n}$. The set 
\begin{equation}
\mathcal{A}_{\widetilde{\Lambda }}:=\{\pi
_{(t_{1},...,t_{n})}^{-1}(F)\subseteq \widetilde{\Lambda }\text{ }\mid \text{
}F\in \mathcal{F}_{\mathrm{sp}X_{1}\times \cdots \times \mathrm{sp}X_{n}},%
\text{\ \ }\{X_{1},...,X_{n}\}\subset \mathfrak{X}_{\mathcal{H}},\text{\ \ }%
n\in \mathbb{N}\}.  \label{al}
\end{equation}%
of all cylindrical subsets $\pi _{(t_{1},...,t_{n})}^{-1}(F)$ constitutes an
algebra on $\widetilde{\Lambda }.$

Lemma 2 implies.

\begin{theorem}
For family (\ref{12}) of finitely additive measures $\mathcal{P}%
_{(X_{1},...,X_{n})},$ there exists a unique normalized finitely additive $%
\mathcal{L}_{\mathcal{H}}^{(s)}$-valued measure 
\begin{equation}
\mathbb{P}:\mathcal{A}_{\widetilde{\Lambda }}\rightarrow \mathcal{L}_{%
\mathcal{H}}^{(s)},\ \ \ \ \mathbb{P(}\widetilde{\Lambda })=\mathbb{I}_{%
\mathcal{H}},  \label{19}
\end{equation}%
such that 
\begin{equation}
\mathbb{P}\left( \pi _{(X_{1},...,X_{n})}^{-1}(F)\right) =\mathcal{P}%
_{(X_{1},...,X_{n})}(F)  \label{20}
\end{equation}%
for all sets $F\in \mathcal{F}_{\mathrm{sp}X_{1}\times \cdots \times \mathrm{%
sp}X_{n}}$ and an arbitrary finite collection $\{X_{1},...,X_{n}\}$ $\subset 
\mathfrak{X}_{\mathcal{H}}$. In particular,%
\begin{eqnarray}
\mathbb{P(}\pi _{X_{1}}^{-1}(B_{1})\cap \cdots \cap \pi
_{X_{n}}^{-1}(B_{n})) &=&\frac{1}{n!}\left\{ \mathrm{P}_{X_{1}}(B_{1})\cdot
\ldots \cdot \mathrm{P}_{X_{n}}(B_{n})\right\} _{\mathrm{sym}},  \label{21}
\\
B_{1} &\in &\mathcal{B}_{\mathrm{sp}X_{1}},...,B_{n}\in \mathcal{B}_{\mathrm{%
sp}X_{n}}  \notag
\end{eqnarray}%
for each finite number of mutually non-equal operators $X_{1},...,X_{n}\in 
\mathfrak{X}_{\mathcal{H}}.$\medskip
\end{theorem}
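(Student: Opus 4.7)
The plan is to apply Lemma 2 directly, after checking that family (\ref{12}) falls within its scope. First, I would make the identifications indicated in the paragraph preceding the theorem, namely $T\rightarrow \mathfrak{X}_{\mathcal{H}}$, $\Lambda_{t}\rightarrow \mathrm{sp}X$, and $\mathcal{F}_{\Lambda_{t}}\rightarrow \mathcal{B}_{\mathrm{sp}X}$, so that $\Lambda$ becomes $\widetilde{\Lambda}=\prod_{X\in\mathfrak{X}_{\mathcal{H}}}\mathrm{sp}X$ and $\mathcal{A}_{\Lambda}$ becomes the algebra $\mathcal{A}_{\widetilde{\Lambda}}$ defined in (\ref{al}). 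Each measure $\mathcal{P}_{(X_{1},\dots,X_{n})}$ is by construction a normalized finitely additive $\mathcal{L}_{\mathcal{H}}^{(s)}$-valued measure on $\mathcal{F}_{\mathrm{sp}X_{1}\times\cdots\times\mathrm{sp}X_{n}}$, and since $\mathcal{L}_{\mathcal{H}}^{(s)}\subset\mathcal{L}_{\mathcal{H}}$, the family (\ref{12}) is indeed of the form (\ref{18_1}).

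Next, I would verify the two consistency conditions (\ref{13_}) and (\ref{14_}); but these are precisely relations (\ref{13}) and (\ref{14}), which are the content of Lemma 1. Hence all hypotheses of Lemma 2 are satisfied, and Lemma 2 yields a unique normalized finitely additive $\mathcal{L}_{\mathcal{H}}$-valued measure $\mathbb{P}:\mathcal{A}_{\widetilde{\Lambda}}\rightarrow \mathcal{L}_{\mathcal{H}}$ obeying (\ref{20}).

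The one point that requires a small separate remark, and is the mildly delicate step, is to show that $\mathbb{P}$ actually takes values in the self-adjoint subspace $\mathcal{L}_{\mathcal{H}}^{(s)}$ rather than merely in $\mathcal{L}_{\mathcal{H}}$. This follows because on the generating algebra $\mathcal{A}_{\widetilde{\Lambda}}$ the value $\mathbb{P}(\pi_{(X_{1},\dots,X_{n})}^{-1}(F))$ coincides with $\mathcal{P}_{(X_{1},\dots,X_{n})}(F)\in\mathcal{L}_{\mathcal{H}}^{(s)}$, and finite additivity together with the linear-combination formula used to extend from rectangles to elements of the product algebra preserves self-adjointness (since $\mathcal{L}_{\mathcal{H}}^{(s)}$ is a real linear subspace of $\mathcal{L}_{\mathcal{H}}$ closed under finite real-linear combinations).

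Finally, to obtain (\ref{21}) I would specialize (\ref{20}) to a rectangular cylinder. For mutually non-equal $X_{1},\dots,X_{n}\in\mathfrak{X}_{\mathcal{H}}$ and $B_{i}\in\mathcal{B}_{\mathrm{sp}X_{i}}$, the set $\pi_{X_{1}}^{-1}(B_{1})\cap\cdots\cap\pi_{X_{n}}^{-1}(B_{n})$ equals $\pi_{(X_{1},\dots,X_{n})}^{-1}(B_{1}\times\cdots\times B_{n})$, so (\ref{20}) together with the defining relation (\ref{9_}) gives the symmetrized product formula (\ref{21}). Uniqueness of $\mathbb{P}$ is inherited from the uniqueness clause in Lemma 2.
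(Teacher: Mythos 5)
Your proposal is correct and takes essentially the same route as the paper, which obtains Theorem 1 by observing that family (\ref{12}) is a particular case of (\ref{18_1}) under the substitutions $T\rightarrow \mathfrak{X}_{\mathcal{H}}$, $\Lambda _{t}\rightarrow \mathrm{sp}X$, $\mathcal{F}_{\Lambda _{t}}\rightarrow \mathcal{B}_{\mathrm{sp}X}$, that Lemma 1 supplies the consistency conditions (\ref{13_}), (\ref{14_}), and then invoking Lemma 2. Your extra remark on why $\mathbb{P}$ lands in $\mathcal{L}_{\mathcal{H}}^{(s)}$ rather than merely $\mathcal{L}_{\mathcal{H}}$ is a small detail the paper leaves implicit, and it is handled correctly.
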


Theorem 1 implies.

\begin{proposition}
Let $\{\mathcal{P}_{(X_{1},...,X_{n})}\}$ be $\mathcal{L}_{\mathcal{H}%
}^{(s)} $-valued measures (\ref{me}). For every density operator $\rho $ on $%
\mathcal{H}$, there exists a unique normalized finitely additive real-valued
measure%
\begin{equation}
\mu _{\rho }:\mathcal{A}_{\widetilde{\Lambda }}\rightarrow \mathbb{R},\text{
\ \ }\mu _{\rho }(\widetilde{\Lambda })=1,  \label{22}
\end{equation}%
such that 
\begin{equation}
\mathrm{tr}[\rho \mathcal{P}_{(X_{1},...,X_{n})}(F)]=\mu _{\rho }\left( \pi
_{(X_{1},...,X_{n})}^{-1}(F)\right)  \label{23}
\end{equation}%
for all sets $F\in \mathcal{F}_{\mathrm{sp}X_{1}\times \cdots \times \mathrm{%
sp}X_{n}}$ and an arbitrary finite collection $\{X_{1},...,X_{n}\}$ $\subset 
\mathfrak{X}_{\mathcal{H}}.$ In particular,%
\begin{eqnarray}
\frac{1}{n!}\mathrm{tr}[\rho \{\mathrm{P}_{X_{1}}(B_{1})\cdot \ldots \cdot 
\mathrm{P}_{X_{n}}(B_{n})\}_{\mathrm{sym}}] &=&\mu _{\rho }\left( \pi
_{X_{1}}^{-1}(B_{1})\cap \cdots \cap \pi _{X_{n}}^{-1}(B_{n})\right) ,
\label{24} \\
B_{1} &\in &\mathcal{B}_{\mathrm{sp}X_{1}},...,B_{n}\in \mathcal{B}_{\mathrm{%
sp}X_{n}},  \notag
\end{eqnarray}%
for each finite number of mutually non-equal operators $X_{1},...,X_{n}\in 
\mathfrak{X}_{\mathcal{H}}.$
\end{proposition}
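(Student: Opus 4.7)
The plan is to \emph{define} $\mu_\rho$ directly from the operator-valued measure $\mathbb{P}$ supplied by Theorem 1 and then verify that this candidate has all the required properties, with uniqueness being essentially automatic from the shape of the algebra $\mathcal{A}_{\widetilde{\Lambda}}$.

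Concretely, I would set
\begin{equation}
\mu_\rho(A) := \mathrm{tr}[\rho\, \mathbb{P}(A)], \qquad A \in \mathcal{A}_{\widetilde{\Lambda}},
\end{equation}
and then check the four clauses in turn. First, $\mu_\rho$ is real-valued because $\mathbb{P}(A) \in \mathcal{L}_{\mathcal{H}}^{(s)}$ and $\rho$ is a density operator, so $\rho \mathbb{P}(A)$ is trace-class with real trace. Second, finite additivity of $\mu_\rho$ follows from the finite additivity of $\mathbb{P}$ on the algebra $\mathcal{A}_{\widetilde{\Lambda}}$ together with linearity of the trace. Third, normalization is immediate: $\mu_\rho(\widetilde{\Lambda}) = \mathrm{tr}[\rho\, \mathbb{P}(\widetilde{\Lambda})] = \mathrm{tr}[\rho\, \mathbb{I}_{\mathcal{H}}] = 1$ by (\ref{19}). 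Fourth, the key identity (\ref{23}) is obtained by taking $A = \pi_{(X_1,\ldots,X_n)}^{-1}(F)$ and invoking (\ref{20}):
\begin{equation}
\mu_\rho\bigl(\pi_{(X_1,\ldots,X_n)}^{-1}(F)\bigr) = \mathrm{tr}\bigl[\rho\, \mathbb{P}(\pi_{(X_1,\ldots,X_n)}^{-1}(F))\bigr] = \mathrm{tr}[\rho\, \mathcal{P}_{(X_1,\ldots,X_n)}(F)].
\end{equation}
The particular form (\ref{24}) then follows by specializing (\ref{23}) to rectangles $F = B_1 \times \cdots \times B_n$ and using (\ref{21}).

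For uniqueness, the decisive observation is that by construction (\ref{al}), \emph{every} element of $\mathcal{A}_{\widetilde{\Lambda}}$ is a cylinder set $\pi_{(X_1,\ldots,X_n)}^{-1}(F)$ for some finite tuple $(X_1,\ldots,X_n) \subset \mathfrak{X}_{\mathcal{H}}$ and some $F \in \mathcal{F}_{\mathrm{sp}X_1 \times \cdots \times \mathrm{sp}X_n}$. Hence any finitely additive real-valued measure on $\mathcal{A}_{\widetilde{\Lambda}}$ satisfying (\ref{23}) is fully determined by its values there, so uniqueness is immediate (and consistency across different representations of the same cylinder set, were it to arise, is already guaranteed by the consistency conditions (\ref{13_})–(\ref{14_}) that fed into Theorem 1).

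I do not anticipate a serious obstacle: the whole content has been packed into the extension Theorem 1, and the proposition is essentially its trivial scalarization by the functional $A \mapsto \mathrm{tr}[\rho A]$. The only mild subtlety to be careful with is writing down the verifications using the unique (cylinder) representation of elements of $\mathcal{A}_{\widetilde{\Lambda}}$, so as to make clear that well-definedness of $\mu_\rho$ does not require an independent consistency check beyond what Theorem 1 already delivers.
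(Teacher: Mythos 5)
Your proposal is correct and follows essentially the same route as the paper: both define $\mu_\rho(A):=\mathrm{tr}[\rho\,\mathbb{P}(A)]$ on $\mathcal{A}_{\widetilde{\Lambda}}$ and read off (\ref{23})--(\ref{24}) from (\ref{20})--(\ref{21}). Your uniqueness argument (every element of $\mathcal{A}_{\widetilde{\Lambda}}$ is a cylinder set, so (\ref{23}) already pins down the measure everywhere) is in fact spelled out more carefully than the paper's one-line appeal to the uniqueness of $\mathbb{P}$.
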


\begin{proof}
For a density operator $\rho $ on $\mathcal{H},$ relation (\ref{20}) implies%
\begin{equation}
\mathrm{tr}[\rho \mathcal{P}_{(X_{1},...,X_{n})}(F)]=\mathrm{tr}[\rho 
\mathbb{P}\left( \pi _{(X_{1},...,X_{n})}^{-1}(F)\right) ]  \label{25}
\end{equation}%
for all sets $F\in \mathcal{F}_{\mathrm{sp}X_{1}\times \cdots \times \mathrm{%
sp}X_{n}}.$ Introduce on the algebra $\mathcal{A}_{\widetilde{\Lambda }}$
the normalized finitely additive real-valued measure 
\begin{equation}
\mu _{\rho }\left( A\right) :=\mathrm{tr}[\rho \mathbb{P}\left( A\right) ],%
\text{ \ \ }A\in \mathcal{A}_{\widetilde{\Lambda }}.  \label{26}
\end{equation}%
Since $\mathbb{P}$ is a unique $\mathcal{L}_{\mathcal{H}}^{(s)}$-valued
finitely additive measure on $\mathcal{A}_{\widetilde{\Lambda }}$ satisfying
condition (\ref{20}), the measure $\mu _{\rho }$ defined by relation (\ref%
{25}) is also a unique normalized real-valued finitely additive measure on $%
\mathcal{A}_{\widetilde{\Lambda }}$ satisfying condition (\ref{23}), hence, (%
\ref{24}).
\end{proof}

\section{Quasi-classical probability modelling}

Based on theorem 1 and proposition 1, we proceed to prove that the Hilbert
space description of all joint von Neumann measurements on a quantum system
can be reproduced via a set of random variables and a set of normalized
real-valued measures on a single space $(\Omega ,\mathcal{F}_{\Omega }).$

\begin{theorem}
Let $\mathcal{H}$ be a complex separable Hilbert space. There
exist:\smallskip \newline
(i) a set $\Omega $ and an algebra $\mathcal{F}_{\Omega }$ of subsets of $%
\Omega ;$ \smallskip \newline
(ii) a $\mathcal{F}_{\Omega }/\mathcal{B}_{\mathrm{sp}X}$-measurable\textrm{%
\ }real-valued function (random variable) $f_{X}:\Omega \rightarrow \mathrm{%
sp}X$ for each quantum observable $X$ on $\mathcal{H}$;\smallskip \newline
such that $f_{X_{1}}\neq f_{X_{2}}$ for $X_{1}\neq X_{2}$ and, to each
quantum state $\rho $ on $\mathcal{H},$ there corresponds a unique
normalized finitely additive real-valued measure $\mu _{\rho }$ on $(\Omega ,%
\mathcal{F}_{\Omega })$ satisfying the relation 
\begin{eqnarray}
\frac{1}{n!}\mathrm{tr}[\rho \{\mathrm{P}_{X_{1}}(B_{1})\cdot \ldots \cdot 
\mathrm{P}_{X_{n}}(B_{n})\}_{\mathrm{sym}}] &=&\mu _{\rho }\left( \text{ }%
f_{X_{1}}^{-1}(B_{1})\cap \cdots \cap f_{X_{n}}^{-1}(B_{n})\right) ,
\label{27} \\
B_{1} &\in &\mathcal{B}_{\mathrm{sp}X_{1}},...,B_{n}\in \mathcal{B}_{\mathrm{%
sp}X_{n}},  \notag
\end{eqnarray}%
\smallskip for each finite collection $\{X_{1},...,X_{n}\}$ of quantum
observables on $\mathcal{H}$. In particular,%
\begin{eqnarray}
\mathrm{tr}[\rho \{\mathrm{P}_{X_{1}}(B_{1})\cdot \ldots \cdot \mathrm{P}%
_{X_{n}}(B_{n})\}] &=&\mu _{\rho }\left( \text{ }f_{X_{1}}^{-1}(B_{1})\cap
\cdots \cap f_{X_{n}}^{-1}(B_{n})\right) ,  \label{28} \\
B_{1} &\in &\mathcal{B}_{\mathrm{sp}X_{1}},...,B_{n}\in \mathcal{B}_{\mathrm{%
sp}X_{n}},  \notag
\end{eqnarray}%
for every state $\rho $ and an arbitrary finite collection $%
\{X_{1},...,X_{n}\}$ of quantum observables on $\mathcal{H}$ mutually
commuting in the sense of relation (\ref{com}).
\end{theorem}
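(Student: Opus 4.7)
The plan is to take $(\Omega, \mathcal{F}_\Omega) := (\widetilde{\Lambda}, \mathcal{A}_{\widetilde{\Lambda}})$ from the construction preceding Theorem 1, set $f_X := \pi_X$ for each quantum observable $X \in \mathfrak{X}_\mathcal{H}$, and take $\mu_\rho$ to be the measure already produced by Proposition 1. With this identification the entire theorem reduces to three verifications: (i) that each $f_X$ is $\mathcal{A}_{\widetilde{\Lambda}}/\mathcal{B}_{\mathrm{sp}\,X}$-measurable, (ii) that $f_{X_1} \neq f_{X_2}$ whenever $X_1 \neq X_2$, and (iii) that the two displayed identities (\ref{27}) and (\ref{28}) follow from Proposition 1.

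First I would dispose of the measurability: for any $B \in \mathcal{B}_{\mathrm{sp}\,X}$ the set $\pi_X^{-1}(B)$ is a single-index cylinder and hence belongs to $\mathcal{A}_{\widetilde{\Lambda}}$ by the very definition (\ref{al}). For the separation $f_{X_1} \neq f_{X_2}$, I would exploit the fact that $\widetilde{\Lambda} = \prod_{X \in \mathfrak{X}_\mathcal{H}} \mathrm{sp}\,X$ is a full Cartesian product, so the coordinate values $\widetilde{\lambda}(X_1)$ and $\widetilde{\lambda}(X_2)$ may be chosen independently. A brief case split shows that $\pi_{X_1}$ and $\pi_{X_2}$ can coincide as functions on $\widetilde{\Lambda}$ only if the two spectra are the same singleton $\{\lambda\}$, but then $X_1 = \lambda \mathbb{I}_\mathcal{H} = X_2$, contradicting the hypothesis $X_1 \neq X_2$.

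Next I would rewrite (\ref{27}) using the elementary set identity
\begin{equation*}
f_{X_1}^{-1}(B_1) \cap \cdots \cap f_{X_n}^{-1}(B_n) = \pi_{(X_1,\ldots,X_n)}^{-1}(B_1 \times \cdots \times B_n),
\end{equation*}
so that (\ref{24}) in Proposition 1 becomes precisely (\ref{27}); uniqueness of $\mu_\rho$ is inherited from Proposition 1 because every element of $\mathcal{A}_{\widetilde{\Lambda}}$ is a cylinder of the required form. For the particular relation (\ref{28}), under the mutual-commutativity hypothesis (\ref{com}) the spectral projections $\mathrm{P}_{X_i}(B_i)$ pairwise commute, so the symmetrised product $\{\mathrm{P}_{X_1}(B_1)\cdot\ldots\cdot\mathrm{P}_{X_n}(B_n)\}_{\mathrm{sym}}$ collapses to $n!$ copies of the ordinary product, and the factor $1/n!$ in (\ref{27}) cancels exactly.

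I do not anticipate a serious obstacle: all the analytic work — the symmetrised spectral product of Section 3, the Kolmogorov-type extension of Lemma 2, the operator-valued cylindrical measure $\mathbb{P}$ of Theorem 1, and the real-valued measure $\mu_\rho = \mathrm{tr}[\rho\, \mathbb{P}(\cdot)]$ of Proposition 1 — is already in hand, and the present theorem amounts to relabelling this material so that the observable $X$ appears explicitly as a random variable $f_X$ depending only on $X$. The only point that requires even a moment of attention is the separation claim for distinct operators with possibly overlapping spectra, which, as noted above, is handled immediately by the Cartesian-product structure of $\widetilde{\Lambda}$.
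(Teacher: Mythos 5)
Your proposal is correct and follows essentially the same route as the paper: the paper's proof of Theorem 2 likewise takes $(\Omega,\mathcal{F}_{\Omega})=(\widetilde{\Lambda},\mathcal{A}_{\widetilde{\Lambda}})$, sets $f_{X}=\pi_{X}$, and invokes the measure $\mu_{\rho}$ of Proposition 1 to obtain (\ref{27}) and (\ref{28}). Your additional verifications (measurability of $\pi_{X}$, the case analysis showing $\pi_{X_{1}}\neq\pi_{X_{2}}$, and the collapse of the symmetrized product under condition (\ref{com})) are details the paper leaves implicit, and they are all sound.
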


\begin{proof}
In order to prove the existence point of theorem 2, let us take the space $(%
\widetilde{\Lambda },\mathcal{A}_{\widetilde{\Lambda }})$ considered in
theorem 1. Namely, $\widetilde{\Lambda }$ is the set of all real-valued
functions $\widetilde{\lambda }:\mathfrak{X}_{\mathcal{H}}\rightarrow \cup
_{X\in \mathfrak{X}_{\mathcal{H}}}\mathrm{sp}X$ with values $\widetilde{%
\lambda }_{X}\in \mathrm{sp}X$ and $\mathcal{A}_{\widetilde{\Lambda }}$ is
the algebra (\ref{al}) of all cylindrical subsets of $\widetilde{\Lambda }$
having the form $\pi _{(X_{1},...,X_{n})}^{-1}(F),$ where $F\in \mathcal{F}_{%
\mathrm{sp}X_{1}\times \cdots \times \mathrm{sp}X_{n}}$ and $%
\{X_{1},...,X_{n}\}$ $\subset \mathfrak{X}_{\mathcal{H}}.$

For each observable $X\in \mathfrak{X}_{\mathcal{H}}$, we take on $(%
\widetilde{\Lambda },\mathcal{A}_{\widetilde{\Lambda }})$ the random
variable $\pi _{X}(\widetilde{\lambda })=\widetilde{\lambda }_{X}\in \mathrm{%
sp}X$ $\subseteq R$ and note that $\pi _{X_{1}}\neq \pi _{X_{2}}$ for $%
X_{1}\neq X_{2}$.

Then, by proposition 2, to each quantum state $\rho $ on $\mathcal{H},$
there corresponds a unique normalized real-valued measures $\mu _{\rho }$ on 
$\mathcal{A}_{\widetilde{\Lambda }}$ satisfying (\ref{24}) and, hence,
relations (\ref{27}) and (\ref{28}). \medskip
\end{proof}

From relations (\ref{27}) and (\ref{2}) it follows.

\begin{corollary}
In theorem 2, let $\{X_{1},...,X_{n}\}$ be a finite collection of bounded
quantum observables $X_{1},...,X_{n}$ on $\mathcal{H}$. Then 
\begin{equation}
\frac{1}{n!}\mathrm{tr}[\rho \{X_{1}\cdot \ldots \cdot X_{n}\}_{\mathrm{sym}%
}]=\dint\limits_{\Omega }f_{X_{1}}(\omega )\cdot \ldots \cdot
f_{X_{n}}(\omega )\text{ }\mu _{\rho }\left( \mathrm{d}\omega \right)
\label{29}
\end{equation}%
for all quantum states $\rho .$
\end{corollary}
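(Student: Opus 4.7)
The plan is to reduce Corollary 1 to the already-established identity (\ref{27}) by approximating each bounded observable $X_i$ in operator norm by a finite $\mathbb{R}$-linear combination of its own spectral projections, expanding the symmetrized product multilinearly, applying (\ref{27}) to each of the finitely many resulting terms, and passing to a uniform limit on both sides.

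More concretely: since each $X_i$ is bounded, its spectrum $\mathrm{sp}X_i$ is compact in $\mathbb{R}$. For each $N \geq 1$ and each $i$, I fix a finite Borel partition $\{B^{(i,N)}_{k}\}_{k}$ of $\mathrm{sp}X_i$ of mesh at most $1/N$, pick sample points $\lambda^{(i,N)}_{k}\in B^{(i,N)}_{k}$, and form the simple operator $X_i^{(N)} := \sum_{k}\lambda^{(i,N)}_{k}\,\mathrm{P}_{X_i}(B^{(i,N)}_{k})$. The spectral representation (\ref{2}) gives $\|X_i^{(N)}-X_i\|\leq 1/N$, and hence $\{X_1^{(N)}\cdot\ldots\cdot X_n^{(N)}\}_{\mathrm{sym}}$ converges in operator norm to $\{X_1\cdot\ldots\cdot X_n\}_{\mathrm{sym}}$. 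Expanding the symmetrized product of the simple operators by multilinearity and applying (\ref{27}) termwise turns the target identity, for the approximants, into
\begin{equation*}
\frac{1}{n!}\mathrm{tr}\bigl[\rho\{X_1^{(N)}\cdot\ldots\cdot X_n^{(N)}\}_{\mathrm{sym}}\bigr] = \int_{\Omega}\varphi_N(\omega)\,\mu_\rho(\mathrm{d}\omega),
\end{equation*}
where $\varphi_N$ is the simple $\mathcal{F}_{\Omega}$-measurable function taking the constant value $\lambda^{(1,N)}_{k_1}\cdots\lambda^{(n,N)}_{k_n}$ on the cylinder $f_{X_1}^{-1}(B^{(1,N)}_{k_1})\cap\cdots\cap f_{X_n}^{-1}(B^{(n,N)}_{k_n})$.

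To conclude I let $N\to\infty$. On the left, continuity of $\mathrm{tr}[\rho\,\cdot]$ in operator norm delivers the desired $\frac{1}{n!}\mathrm{tr}[\rho\{X_1\cdot\ldots\cdot X_n\}_{\mathrm{sym}}]$. On the right, since $f_{X_i}(\omega)\in\mathrm{sp}X_i$ and the $i$th partition has mesh $\leq 1/N$, the functions $\varphi_N$ converge uniformly on $\Omega$ to $f_{X_1}\cdots f_{X_n}$, and each $f_{X_i}$ is bounded by $\|X_i\|$. The main obstacle is precisely the RHS limit: since $\mu_\rho$ is only a finitely additive real-valued measure, the integral $\int_{\Omega}f_{X_1}\cdots f_{X_n}\,d\mu_\rho$ is not a priori defined by any $\sigma$-additive machinery and must instead be introduced as the uniform limit of integrals of simple functions, which requires $\mu_\rho$ to have bounded total variation on $\mathcal{F}_\Omega$. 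I would secure this bound from the representation $\mu_\rho(A)=\mathrm{tr}[\rho\,\mathbb{P}(A)]$ established in the proof of Proposition 1, combined with the fact that the operator-valued measure $\mathbb{P}$ of Theorem 1 is assembled, via the consistency extension of Lemma 2, from the symmetrized product measures $\mathcal{P}_{(X_{1},\ldots,X_{n})}$, whose values on rectangles are bounded in operator norm by $1$. Once this Dunford--Schwartz-type integral is in place, the uniform convergence $\varphi_N\to f_{X_1}\cdots f_{X_n}$ yields (\ref{29}).
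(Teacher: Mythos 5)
Your overall strategy is exactly the expansion the paper intends: the paper offers no proof of this corollary beyond the remark that it ``follows from relations (\ref{27}) and (\ref{2})'', and approximating each bounded $X_{i}$ in operator norm by $X_{i}^{(N)}=\sum_{k}\lambda _{k}^{(i,N)}\mathrm{P}_{X_{i}}(B_{k}^{(i,N)})$, expanding $\{\cdot \}_{\mathrm{sym}}$ multilinearly, applying (\ref{27}) termwise, and passing to the limit is the right way to carry that out. The left-hand side of your approximate identity converges correctly, and $\varphi _{N}\rightarrow f_{X_{1}}\cdots f_{X_{n}}$ uniformly as you say.

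The gap is in your justification of the right-hand limit. You propose to define $\int_{\Omega }f_{X_{1}}\cdots f_{X_{n}}\,\mathrm{d}\mu _{\rho }$ as a Dunford--Schwartz integral and to secure the required bounded total variation of $\mu _{\rho }$ from the fact that $\left\Vert \mathcal{P}_{(X_{1},...,X_{n})}(B_{1}\times \cdots \times B_{n})\right\Vert \leq 1$ on rectangles. That inference does not work: a variation sum $\sum_{m}\left\vert \mathrm{tr}[\rho \,\mathcal{P}_{(X_{1},...,X_{n})}(E_{m})]\right\vert $ has each term bounded by $1$, but the number of rectangles in a partition is unbounded, and a norm bound on individual values of an operator-valued measure says nothing about the variation of the induced scalar measure. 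Indeed, for non-commuting observables with continuous spectra the symmetrized quasi-distribution $F\mapsto \mathrm{tr}[\rho \,\mathcal{P}_{(X_{1},X_{2})}(F)]$ (the real part of a Kirkwood--Dirac-type distribution, e.g.\ for position and momentum in a pure state $\psi $ with $\psi \notin L^{1}$) has unbounded variation already on the product algebra, so $\mu _{\rho }$ need not have bounded variation even on the relevant cylinder sub-algebra, and the general uniform-limit machinery is unavailable. The statement can still be rescued along your own lines, but differently: the sequence $\int_{\Omega }\varphi _{N}\,\mathrm{d}\mu _{\rho }$ is convergent not because of any variation bound but because it equals $\frac{1}{n!}\mathrm{tr}[\rho \{X_{1}^{(N)}\cdot \ldots \cdot X_{n}^{(N)}\}_{\mathrm{sym}}]$, which converges; moreover any two product-form simple approximations give the same limit for the same reason. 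So the right-hand side of (\ref{29}) must be \emph{defined} as this common limit (or one must restrict to observables with finite discrete spectra, where the product algebra is finite and the issue disappears); it cannot be obtained as an already-existing integral against $\mu _{\rho }$ in the way you describe.
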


Theorem 2 implies.

\begin{corollary}
For the probabilistic description of all joint von Neumann measurements upon
a quantum state $\rho $ on a complex separable Hilbert space $\mathcal{H},$
there exist:\smallskip \newline
(i) a measure space $(\Omega ,\mathcal{F}_{\Omega },\mu _{\rho }),$ where $%
\mathcal{F}_{\Omega }$ is an algebra of subsets of a set $\Omega $ and $\mu
_{\rho }$ is a normalized finitely additive real-valued measure on $\mathcal{%
F}_{\Omega }$;\newline
(ii) a random variable $f_{X}:\Omega \rightarrow \mathrm{sp}X$ for each
quantum observable $X$ on $\mathcal{H};\smallskip $\newline
such that $f_{X_{1}}\neq f_{X_{2}}$ for $X_{1}\neq X_{2},$ a space $(\Omega ,%
\mathcal{F}_{\Omega })$ and random variables $\{f_{X}\}$ do not depend on a
state $\rho $ and the representation 
\begin{eqnarray}
\mathrm{tr}[\rho \{\mathrm{P}_{X_{1}}(B_{1})\cdot \ldots \cdot \mathrm{P}%
_{X_{n}}(B_{n})\}] &=&\mu _{\rho }\left( \text{ }f_{X_{1}}^{-1}(B_{1})\cap
\cdots \cap f_{X_{n}}^{-1}(B_{n})\right) ,  \label{28'} \\
B_{1} &\in &\mathcal{B}_{\mathrm{sp}X_{1}},...,B_{n}\in \mathcal{B}_{\mathrm{%
sp}X_{n}},  \notag
\end{eqnarray}%
\smallskip holds for an arbitrary finite collection $\{X_{1},...,X_{n}\}$ of
quantum observables on $\mathcal{H}$ mutually commuting in the sense of
relation (\ref{com}).
\end{corollary}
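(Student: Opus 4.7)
The plan is simply to transport the construction already available from Theorem 1 and Proposition 1 into the language of random variables on an abstract measurable space. Concretely, take $\Omega := \widetilde{\Lambda} = \dprod_{X \in \mathfrak{X}_{\mathcal{H}}} \mathrm{sp}X$ and $\mathcal{F}_{\Omega} := \mathcal{A}_{\widetilde{\Lambda}}$ (the algebra of cylindrical subsets defined in (\ref{al})). For each quantum observable $X \in \mathfrak{X}_{\mathcal{H}}$, take the random variable $f_X := \pi_X$, where $\pi_X(\widetilde{\lambda}) = \widetilde{\lambda}_X \in \mathrm{sp}X$ is the canonical projection introduced in (\ref{17_}).

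The three ingredients demanded by the statement then have to be checked. First, measurability: for every $B \in \mathcal{B}_{\mathrm{sp}X}$, the preimage $f_X^{-1}(B) = \pi_X^{-1}(B)$ is by definition a cylindrical subset of $\widetilde{\Lambda}$ (with the one-element index collection $\{X\}$ and side $B$), so $f_X^{-1}(B) \in \mathcal{A}_{\widetilde{\Lambda}}$ and $f_X$ is $\mathcal{F}_{\Omega}/\mathcal{B}_{\mathrm{sp}X}$-measurable. Second, injectivity of the map $X \mapsto f_X$: if $X_1 \neq X_2$, then the coordinate functions $\pi_{X_1}$ and $\pi_{X_2}$ evaluate at distinct coordinates of $\widetilde{\lambda}$, and since one can find $\widetilde{\lambda} \in \widetilde{\Lambda}$ with $\widetilde{\lambda}_{X_1} \neq \widetilde{\lambda}_{X_2}$, the functions differ. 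Third, the measure $\mu_\rho$: Proposition 1 supplies a unique normalized finitely additive real-valued measure $\mu_\rho$ on $\mathcal{A}_{\widetilde{\Lambda}}$ satisfying (\ref{23}), and in particular (\ref{24}); rewriting $\pi_{X_i}^{-1}(B_i) = f_{X_i}^{-1}(B_i)$ converts (\ref{24}) directly into (\ref{27}).

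To obtain the second claim (\ref{28}), observe that if $X_1,\ldots,X_n$ mutually commute in the sense of (\ref{com}), then all the projection operators $\mathrm{P}_{X_i}(B_i)$ commute pairwise, so every one of the $n!$ orderings in the symmetrized product coincides and hence $\{\mathrm{P}_{X_1}(B_1) \cdots \mathrm{P}_{X_n}(B_n)\}_{\mathrm{sym}} = n!\, \mathrm{P}_{X_1}(B_1) \cdots \mathrm{P}_{X_n}(B_n)$. Dividing by $n!$ on the left-hand side of (\ref{27}) absorbs this factor and yields (\ref{28}).

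Because all the heavy analytic work — the construction of the consistent family $\{\mathcal{P}_{(X_1,\ldots,X_n)}\}$ in Section 3, the operator-valued Kolmogorov-type extension in Lemma 2 and Theorem 1, and the transition to real-valued measures in Proposition 1 — has already been performed, there is essentially no remaining obstacle: the proof is a one-page identification exercise. The only delicate point worth double-checking is that the canonical projections $\pi_X$ are genuinely distinct functions for distinct observables $X$, which is guaranteed by the product structure of $\widetilde{\Lambda}$ indexed over all of $\mathfrak{X}_{\mathcal{H}}$ (so that in particular the indexing distinguishes, say, $X$ from $-X$ even when their spectra coincide as sets).
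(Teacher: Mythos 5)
Your proposal is correct and follows essentially the same route as the paper: the paper obtains this corollary directly from Theorem 2, whose proof is exactly your identification $\Omega =\widetilde{\Lambda }$, $\mathcal{F}_{\Omega }=\mathcal{A}_{\widetilde{\Lambda }}$, $f_{X}=\pi _{X}$, with $\mu _{\rho }$ supplied by Proposition 1, and the passage to the commuting case via $\{\mathrm{P}_{X_{1}}(B_{1})\cdot \ldots \cdot \mathrm{P}_{X_{n}}(B_{n})\}_{\mathrm{sym}}=n!\,\mathrm{P}_{X_{1}}(B_{1})\cdot \ldots \cdot \mathrm{P}_{X_{n}}(B_{n})$ is the same observation the paper leaves implicit.
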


From corollary 2 it follows that the probability distributions of \emph{all} 
\emph{joint} von Neumann measurements on a quantum state $\rho $ can be
reproduced in terms of a single measure space $(\Omega ,\mathcal{F}_{\Omega
},\mu _{\rho })$ with a normalized real-valued measure $\mu _{\rho }$ and a
set of random variables, each modelling the corresponding quantum observable
in all joint von Neumann measurements and depending only on this quantum
observable.

Representation (\ref{28'}) can be otherwise expressed in the form%
\begin{equation}
\mathrm{tr}[\rho \{\mathrm{P}_{X_{1}}(B_{1})\cdot \ldots \cdot \mathrm{P}%
_{X_{n}}(B_{n})\}]=\dint\limits_{\Omega }\chi
_{f_{X_{1}}^{-1}(B_{1})}(\omega )\cdot \ldots \cdot \chi
_{f_{X_{n}}^{-1}(B_{n})}(\omega )\mu _{\rho }\left( \text{d}\omega \right) ,
\label{30'}
\end{equation}%
which is specific for joint probability distributions in a local quasi
hidden variable (LqHV) model of the deterministic type, see Refs. \cite{11,
12}.

Thus, all joint von Neumann measurements on a finite dimensional quantum
state admit \emph{a deterministic quasi hidden variable (qHV) model} \cite%
{11, 12} and, in this model, a random variable modelling a quantum
observable depends only on this quantum observable.

The following statement is proved in appendix C.

\begin{proposition}
In theorem 2:\newline
(i) If $f_{X}:\Omega \rightarrow \mathrm{sp}X$ is a random variable
modelling a quantum observable $X$ via representation (\ref{27}), then, for
each Borel function $\varphi :\mathbb{R}\rightarrow \mathbb{R},$ the random
variable $\varphi \circ f_{X}:\Omega \rightarrow \mathrm{sp}\varphi (X)$
models the quantum observable $\varphi (X);$\newline
(ii) If $\mu _{\rho _{k}},$ $k=1,...,K<\infty ,$ are normalized real-valued
measures, each uniquely corresponding to a quantum state $\rho _{k}$ via
representation (\ref{27}), then the measure $\sum \alpha _{k}\mu _{\rho
_{k}},$ with $\alpha _{k}>0,$ $\sum \alpha _{k}=1,$ uniquely corresponds to
the state $\sum \alpha _{k}\rho _{k}.\medskip $
\end{proposition}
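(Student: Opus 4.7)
The plan is to prove the two parts independently, each reducing to a short calculation that invokes the existence-and-uniqueness content of theorem 2 and proposition 1 together with standard manipulations. Part (i) rests on the Borel functional calculus; part (ii) on linearity of the trace in its state argument.

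For part (i) the essential input is the identity
$$\mathrm{P}_{\varphi(X)}(B) = \mathrm{P}_{X}(\varphi^{-1}(B)), \qquad B \in \mathcal{B}_{\mathbb{R}},$$
which follows directly from the spectral representation (\ref{2}) by a change of variables and also supplies $\mathrm{sp}\,\varphi(X) \subseteq \overline{\varphi(\mathrm{sp}X)}$. Combine it with the trivial set-theoretic identity $(\varphi \circ f_{X})^{-1}(B_0) = f_{X}^{-1}(\varphi^{-1}(B_0))$. To verify that $\varphi \circ f_{X}$ models $\varphi(X)$, I would fix an arbitrary auxiliary finite collection $\{X_1,\dots,X_n\} \subset \mathfrak{X}_{\mathcal{H}}$ and Borel sets $B_0 \in \mathcal{B}_{\mathrm{sp}\,\varphi(X)}$, $B_i \in \mathcal{B}_{\mathrm{sp}X_i}$, and write down the target instance of representation (\ref{27}) for the enlarged collection $\{\varphi(X), X_1, \dots, X_n\}$. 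Substituting the two identities above transforms both sides into the corresponding instance of (\ref{27}) for $\{X, X_1, \dots, X_n\}$ with $\varphi^{-1}(B_0) \in \mathcal{B}_{\mathrm{sp}X}$ in place of $B_0$; this latter instance holds by theorem 2. Since the collection and Borel data were arbitrary, $\varphi \circ f_X$ satisfies the defining property of a random variable modelling $\varphi(X)$.

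For part (ii), I would define $\widetilde{\mu} := \sum_{k=1}^{K} \alpha_k \mu_{\rho_k}$ on $\mathcal{F}_{\Omega}$. It is finitely additive and normalized because each $\mu_{\rho_k}$ is, and $\sum_k \alpha_k = 1$. By linearity of the trace in its state argument,
$$\mathrm{tr}\!\left[\left(\sum_{k=1}^K \alpha_k \rho_k\right) T\right] = \sum_{k=1}^K \alpha_k \, \mathrm{tr}[\rho_k T]$$
for any bounded operator $T$ on $\mathcal{H}$; applying this with $T = \tfrac{1}{n!}\{\mathrm{P}_{X_1}(B_1) \cdot \ldots \cdot \mathrm{P}_{X_n}(B_n)\}_{\mathrm{sym}}$ and invoking (\ref{27}) for each $\rho_k$ shows that $\widetilde{\mu}$ satisfies (\ref{27}) for the state $\sum_k \alpha_k \rho_k$. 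The uniqueness clause of proposition 1 (transported to $(\Omega,\mathcal{F}_\Omega)$ via theorem 2) then forces $\mu_{\sum_k \alpha_k \rho_k} = \widetilde{\mu}$.

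Neither part involves a genuine obstacle: both arguments are bookkeeping once the functional calculus and the linearity of the trace are spelled out. The only mildly delicate point in (i) is to permit $\varphi(X)$ or $X$ to coincide with, or be functionally related to, some $X_j$ in the enlarged collection. This is handled by the consistency relations (\ref{13}), (\ref{14}) for the symmetrized spectral products, which ensure that the reduction to the collection $\{X, X_1, \dots, X_n\}$ is well-defined irrespective of repetitions or permutations of its entries.
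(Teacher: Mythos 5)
Your proposal is correct and follows essentially the same route as the paper's Appendix C: part (i) via the functional-calculus identity $\mathrm{P}_{\varphi (X)}(B)=\mathrm{P}_{X}(\varphi ^{-1}(B))$ combined with $(\varphi \circ f_{X})^{-1}(B)=f_{X}^{-1}(\varphi ^{-1}(B))$ and an application of representation (\ref{27}) to the reduced collection, and part (ii) via linearity of the trace in the state argument together with the uniqueness clause. Your added remark on possible coincidences within the enlarged collection, handled by the consistency relations (\ref{13}), (\ref{14}), is a reasonable point of care that the paper leaves implicit.
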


From theorem 2 and proposition 2 it follows that the observable $\varphi (X)$
is modeled via representation (\ref{27}) by either of two random variables $%
\varphi \circ f_{X}$ or $f_{\varphi (X)}$ on $(\Omega ,\mathcal{F}_{\Omega
}) $ and, for arbitrary $X$ and $\varphi ,$ the latter random variables do
not need to coincide.

Consider, for example, the random variables $\pi _{X},$ $X\in \mathfrak{X}_{%
\mathcal{H}},$ defined on the space $(\widetilde{\Lambda },\mathcal{A}_{%
\widetilde{\Lambda }})$ by relation (\ref{17_}) and used by us above for the
proof of the existence point of theorem 2. The random variables $\varphi
\circ \pi _{X}$ and $\pi _{\varphi (X)}$ do not need to \ coincide for all
observable $X$ and all Borel functions $\varphi :\mathbb{R}\rightarrow 
\mathbb{R}.$ Hence, for arbitrary $X$ and $\varphi ,$ the observable $%
\varphi (X)$ is equivalently modeled on the space $(\widetilde{\Lambda },%
\mathcal{A}_{\widetilde{\Lambda }})$ by either of two different random
variables -- $\varphi \circ \pi _{X}$ and $\pi _{\varphi (X)}.$

Note also that, according to the Kochen and Specker theorem \cite{5}, for a
Hilbert space $\mathcal{H}$ of a dimension $d\geq 3,$ there does not exist a
space $(\Omega ,\mathcal{F}_{\Omega }),$ where, under the condition $\varphi
\circ f\overset{\Phi }{\mapsto }\varphi \circ X,$ $\forall \varphi ,$ a
mapping $f\overset{\Phi }{\mapsto }X$ from a set of random variables on $%
(\Omega ,\mathcal{F}_{\Omega })$ onto the set of all quantum observables on $%
\mathcal{H}$ could be one-to-one.\textbf{\ }

\section{Conclusions}

In the present paper, we have introduced (lemma 2, theorem 1) a
generalization of some items of the Kolmogorov extension theorem \cite{7, 17}
to the case of consistent operator-valued measures and, based on this, we
have proved (theorem 2) that the Hilbert space description of all joint von
Neumann measurements on a quantum system can be reproduced in terms of a
single space $(\Omega ,\mathcal{F}_{\Omega })$ via a set of normalized
real-valued measures, each uniquely corresponding to some quantum state, and
a set of random variables, each being determined only by a modelled quantum
observable and such that if $f_{X}$ is a random variable modelling a quantum
observable $X$ via representation (\ref{27}), then, for each Borel function $%
\varphi :\mathbb{R}\rightarrow \mathbb{R},$ the random variable $\varphi
\circ X$ models (proposition 2) the observable $\varphi \circ X$.

This result, in particular, means that \emph{all joint von Neumann
measurements} on a quantum state\emph{\ }$\rho $ admit (corollary 2) \emph{a
deterministic quasi hidden variable (qHV) model} \cite{11, 12} and, in this
model, a random variable modelling a quantum observable depends only on this
quantum observable.

From the probabilistic point of view, a deterministic qHV model constitutes 
\emph{the} \emph{quasi-classical probability model }-- a new general
probability model formulated in Ref. \cite{12}.

In the quasi-classical probability model specified by a measure space $%
(\Omega ,\mathcal{F}_{\Omega },\nu ),$ a normalized real-valued measure $\nu 
$ does not need to be positive but:\smallskip

\noindent (i) an observable with a value space $(\Lambda ,\mathcal{F}%
_{\Lambda })$ is represented only by such a random variable $f:\Omega
\rightarrow \Lambda ,$ for which the normalized measure $\nu (f^{-1}(\cdot
)) $ on $\mathcal{F}_{\Lambda }$ is a probability one;\smallskip

\noindent (ii) a joint measurement of two observables represented by random
variables $f_{1},$ $f_{2}$ with value spaces $(\Lambda _{n},\mathcal{F}%
_{\Lambda _{n}}),$ $n=1,2,$ is possible iff $\nu (f_{1}^{-1}(F_{1})\cap
f_{2}^{-1}(F_{2}))\geq 0,$ for all $F_{1}\in \mathcal{F}_{\Lambda _{1}},$ $%
F_{2}\in \mathcal{F}_{\Lambda _{2}}.$\smallskip

We stress that though, in the quasi-classical probability model, a measure
space $(\Omega ,\mathcal{F}_{\Omega },\nu )$ does not need to be a
probability one, each modelled measurement, single or joint, satisfies the
Kolmogorov axioms \cite{7} in the sense that it is described by a
probability space, where a probability measure is specified in the above
item (i) or (ii).

From the above results it also follows that if joint von Neumann
measurements on an $N$-partite quantum state $\rho $ are performed by
space-like separated parties, then, for these joint measurements, there
exists a quasi-classical probability model $(\Omega ,\mathcal{F}_{\Omega
},\mu _{\rho })$, which is \emph{local} in the sense that each party
marginal measurement is described by a random variable depending only on the
corresponding observable at the corresponding site, but does not need to be
"classical" -- in the sense of positivity a measure $\mu _{\rho }.$

From the conceptual point of view, the latter mathematical result not only
supports our arguments \cite{10, 18} on a difference between Bell's locality
and the EPR locality but also directly points to a misleading character of
Bell's conjecture on quantum "non-locality" (\emph{action on a distance})
that was introduced by Bell \cite{4, 3} only in view of non-existence of a
local classical probability model for spin measurements on the two-qubit
singlet and due to his further choice that this non-existence is caused
precisely by violation of "locality" but not by violation of "classicality".

\section{Appendix A}

Relation (\ref{13}) follows explicitly from the symmetrized form of the
right-hand side of condition (\ref{9_}).

For clearness, let us first prove relation (\ref{14}) for a collection $%
\{X_{1},...,X_{n}\}$ of bounded quantum observables with discrete spectrums.
In this case, the measure $\mathcal{P}_{(X_{1},...,X_{_{n}})}$ is given by
representation (\ref{9}) and taking into the account the relation $\mathrm{P}%
_{X_{i}}(\mathrm{sp}X_{i})=\mathbb{I}_{\mathcal{H}}$, we have:%
\begin{eqnarray}
&&\mathcal{P}_{(X_{1},...,X_{_{n}})}\left( \{(x_{1},...,x_{n})\in \mathrm{sp}%
X_{1}\times \cdots \times \mathrm{sp}X_{n}\mid (x_{i_{1}},...,x_{i_{k}})\in
F\}\right)  \TCItag{A1}  \label{A1} \\
&=&\frac{1}{n!}\dsum\limits_{(\lambda _{X_{i_{1}}},...,\lambda
_{X_{i_{k}}})\in F}\text{ }\left\{ \text{ }\mathrm{P}_{X_{1}}(\{\lambda
_{X_{1}}\})\cdot \ldots \cdot \mathrm{P}_{X_{n}}(\{\lambda
_{X_{n}}\})\right\} _{\mathrm{sym}}  \notag \\
&=&\frac{1}{k!}\dsum\limits_{(\lambda _{X_{i_{1}}},...,\lambda
_{X_{i_{k}}})\in F}\left\{ \mathrm{P}_{X_{i_{1}}}(\{\lambda
_{X_{i_{1}}}\})\cdot \ldots \cdot \mathrm{P}_{X_{i_{k}}}(\{\lambda
_{X_{i_{k}}}\})\right\} _{\mathrm{sym}}  \notag \\
&=&\mathcal{P}_{(X_{i_{1}},...,X_{i_{k}})}\left( F\right) .  \notag
\end{eqnarray}

In order to prove (\ref{14}) for an arbitrary collection $%
\{X_{1},...,X_{n}\} $ $\subset \mathfrak{X}_{\mathcal{H}}$ of quantum
observables, let us denote by $\mathcal{E}$ the set of all rectangles $%
E:=B_{i_{1}}\times \cdots \times B_{i_{k}}$ with sides $B_{i}\in \mathcal{B}%
_{\mathrm{sp}X_{i}}.$ Since the algebra $\mathcal{F}_{\mathrm{sp}%
X_{i_{1}}\times \cdots \times \mathrm{sp}X_{i_{k}}}$ consists of all finite
unions of mutually disjoint rectangles from $\mathcal{E}$, every $F\in 
\mathcal{F}_{\mathrm{sp}X_{i_{1}}\times \cdots \times \mathrm{sp}X_{i_{k}}}$
admits a finite decomposition 
\begin{equation}
F=\underset{m=1,...,M}{\cup }E_{m},\text{ \ \ }E_{m_{1}}\cap
E_{m_{2}}=\varnothing ,\ \ E_{m}\in \mathcal{E},\text{ \ }M<\infty . 
\tag{A2}  \label{A2}
\end{equation}%
Taking into the account that $\mathcal{P}_{(X_{1},...,X_{_{n}})},$ $\mathcal{%
P}_{(X_{i_{1}},...,X_{i_{k}})}$ are measures and relations (\ref{A2}), (\ref%
{9_}) and $\mathrm{P}_{X_{i}}(\mathrm{sp}X_{i})=\mathbb{I}_{\mathcal{H}},$
we derive:%
\begin{eqnarray}
&&\mathcal{P}_{(X_{1},...,X_{_{n}})}\left( \{(x_{1},...,x_{n})\in \mathrm{sp}%
X_{1}\times \cdots \times \mathrm{sp}X_{n}\mid (x_{i_{1}},...,x_{i_{k}})\in
F\}\right)  \TCItag{A3}  \label{A3} \\
&=&\sum_{m}\mathcal{P}_{(X_{1},...,X_{_{n}})}\left( \{(x_{1},...,x_{n})\in 
\mathrm{sp}X_{1}\times \cdots \times \mathrm{sp}X_{n}\mid
(x_{i_{1}},...,x_{i_{k}})\in E_{m}\}\right)  \notag \\
&=&\sum_{m}\frac{1}{k!}\left\{ \mathrm{P}_{X_{i_{1}}}(B_{i_{1}}^{(m)})\cdot
\ldots \cdot \mathrm{P}_{X_{i_{k}}}(B_{i_{k}}^{(m)})\right\} _{\mathrm{sym}%
}=\sum_{m}\mathcal{P}_{(X_{i_{1}},...,X_{i_{k}})}\left( E_{m}\right)  \notag
\\
&=&\mathcal{P}_{(X_{i_{1}},...,X_{i_{k}})}\left( F\right) .  \notag
\end{eqnarray}

This prove lemma 1.

\section{Appendix B}

Our proof of lemma 2 is quite similar to the proof \cite{7, 17} of the
corresponding items in the Kolmogorov extension theorem for consistent
probability measures (\ref{15}).

Let $\mathcal{A}_{\Lambda }$ be the algebra on $\Lambda $ defined by
relations (\ref{16}) - (\ref{18}). For a set $A\in \mathcal{A}_{\Lambda }$
admitting the representation $A=\pi _{(t_{1},...,t_{n})}^{-1}(F),$ where $%
\{t_{1},...,t_{n}\}\subset T$ and $F\in \mathcal{F}_{\Lambda _{t_{1}}\times
\cdots \times \Lambda _{t_{n}}}$, we let 
\begin{subequations}
\begin{equation}
\mathbb{M}(A):=\mathfrak{M}_{(t_{1},...,t_{_{n}})}(F).  \tag{B1}  \label{B1}
\end{equation}

In order to show that relation (\ref{B1}) defines correctly a set function
on $\mathcal{A}_{\Lambda },$ we must prove that this relation implies a
unique value to a set $A\in \mathcal{A}_{\Lambda }$ even if this set admit
two different representations, say: 
\end{subequations}
\begin{eqnarray}
A &=&\pi _{(t_{i_{1}},...,t_{i_{k}})}^{-1}(F)\equiv \left\{ \lambda \in
\Lambda \mid (\lambda _{t_{i_{1}}},...,\lambda _{t_{i_{k}}})\in F\right\} , 
\TCItag{B2}  \label{B2} \\
A &=&\pi _{(t_{j_{1}},...,t_{j_{_{m}}})}^{-1}(F^{\prime })\equiv \left\{
\lambda \in \Lambda \mid (\lambda _{t_{j_{1}}},...,\lambda _{t_{j_{m}}})\in
F^{\prime }\right\}  \notag
\end{eqnarray}%
for some sets $F\in \mathcal{F}_{\Lambda _{t_{i_{1}}}\times \cdots \times
\Lambda _{t_{i_{k}}}}$ and $F^{\prime }\in \mathcal{F}_{\Lambda
_{t_{j_{1}}}\times \cdots \times \Lambda _{t_{j_{m}}}}$ and some index
collections $\{t_{i_{1}},...,t_{i_{k}}\},$ $\{t_{j_{1}},...,t_{j_{_{m}}}\}%
\subset T$ .

Denote 
\begin{equation}
\{t_{i_{1}},...,t_{i_{k}}\}\cup
\{t_{j_{1}},...,t_{j_{_{m}}}\}:=\{t_{1},...,t_{_{n}}\}.  \tag{B3}  \label{B3}
\end{equation}

From (\ref{B2}) it follows that sets $F$ and $F^{\prime }$ are such that,
for a point in $\Lambda _{t_{1}}\times \cdots \times \Lambda _{t_{n}},$ the
condition $(\lambda _{i_{1}},...,\lambda _{i_{k}})\in F$ implies the
condition $(\lambda _{j_{1}},...,\lambda _{j_{m}})\in F^{\prime }$ and vice
versa, that is:%
\begin{eqnarray}
&&\left\{ \text{ }(\lambda _{1},...,\lambda _{n})\in \Lambda _{t_{1}}\times
\cdots \times \Lambda _{t_{n}}\mid (\lambda _{i_{1}},...,\lambda
_{i_{k}})\in F\right\}  \TCItag{B4}  \label{B4} \\
&=&\left\{ \text{ }(\lambda _{1},...,\lambda _{n})\in \Lambda _{t_{1}}\times
\cdots \times \Lambda _{t_{n}}\mid (\lambda _{j_{1}},...,\lambda
_{j_{m}})\in F^{\prime }\right\} .  \notag
\end{eqnarray}

In view of relations (\ref{B1}) - (\ref{B4}) and the consistency conditions (%
\ref{13_}), (\ref{14_}), we have:%
\begin{eqnarray}
\mathbb{M}(\pi _{(t_{i_{1}},...,t_{i_{k}})}^{-1}(F)) &=&\mathfrak{M}%
_{(t_{i_{1}},...,t_{i_{k}})}(F)  \TCItag{B5}  \label{B5} \\
&=&\mathfrak{M}_{(t_{1},...,t_{_{n}})}\left( \left\{ (\lambda
_{1},...,\lambda _{n})\in \Lambda _{t_{1}}\times \cdots \times \Lambda
_{t_{n}}\mid (\lambda _{i_{1}},...,\lambda _{i_{k}})\in F\right\} \right) 
\notag \\
&=&\mathfrak{M}_{(t_{1},...,t_{_{n}})}\left( \left\{ (\lambda
_{1},...,\lambda _{n})\in \Lambda _{t_{1}}\times \cdots \times \Lambda
_{t_{n}}\mid (\lambda _{j_{1}},...,\lambda _{j_{m}})\in F^{\prime }\right\}
\right)  \notag \\
&=&\mathfrak{M}_{(t_{j_{1}},...,t_{j_{m}})}(F^{\prime })  \notag \\
&=&\mathbb{M(}\pi _{(t_{j_{1}},...,t_{j_{_{m}}})}^{-1}(F^{\prime })).  \notag
\end{eqnarray}

Thus, relation (\ref{B1}) defines a unique set function $\mathbb{M}:\mathcal{%
A}_{\Lambda }\rightarrow \mathcal{L}_{\mathcal{H}}$ satisfying condition (%
\ref{19_}).

Since $\Lambda =\pi _{(t_{1},...,t_{n})}^{-1}(\Lambda _{t_{1}}\times \cdots
\times \Lambda _{t_{n}})$ and $\mathfrak{M}_{(t_{1},...,t_{n})}(\Lambda
_{t_{1}}\times \cdots \times \Lambda _{t_{n}})=\mathbb{I}_{\mathcal{H}}$,
from (\ref{B1}) it follows that the set function $\mathbb{M}$ is normalized,
that is, $\mathbb{M}(\Lambda )=\mathbb{I}_{\mathcal{H}}.$

In order to prove that the normalized set function $\mathbb{M}:\mathcal{A}%
_{\Lambda }\rightarrow \mathcal{L}_{\mathcal{H}}$ is additive, let us
consider in the algebra $\mathcal{A}_{\Lambda }$ two disjoint sets 
\begin{equation}
A_{1}=\pi _{(t_{i_{1}},...,t_{i_{k}})}^{-1}(F_{1}),\text{ \ \ \ }A_{2}=\pi
_{(t_{j_{1}},...,t_{j_{m}})}^{-1}(F_{2}),  \tag{B6}  \label{B6}
\end{equation}%
specified by some index collections $\{t_{i_{1}},...,t_{i_{k}}\},$ $%
\{t_{j_{1}},...,t_{j_{m}}\}\subseteq $ $\{t_{1},...,t_{n}\}\subset T$ and
sets $F_{1}\in \mathcal{F}_{\Lambda _{t_{i_{1}}}\times \cdots \times \Lambda
_{t_{i_{k}}}}$ and $F_{2}\in \mathcal{F}_{\Lambda _{t_{j_{1}}}\times \cdots
\times \Lambda _{t_{j_{m}}}}.$

Since $A_{1}\cap A_{2}$ $=\varnothing ,$ the sets $F_{1},$ $F_{2}$ in (\ref%
{B6}) are such that, for a point in $\Lambda _{t_{1}}\times \cdots \times
\Lambda _{t_{n}},$ conditions $(\lambda _{i_{1}},...,\lambda _{i_{k}})\in
F_{1}$ and $(\lambda _{j_{1}},...,\lambda _{j_{m}})\in F_{2}$ are mutually
exclusive, that is: 
\begin{eqnarray}
&&\left\{ (\lambda _{1},...,\lambda _{n})\in \Lambda _{t_{1}}\times \cdots
\times \Lambda _{t_{n}}\mid (\lambda _{i_{1}},...,\lambda _{i_{k}})\in
F_{1}\right\}  \TCItag{B7}  \label{B7} \\
&&\cap \left\{ (\lambda _{1},...,\lambda _{n})\in \Lambda _{t_{1}}\times
\cdots \times \Lambda _{t_{n}}\mid (\lambda _{j_{1}},...,\lambda
_{j_{m}})\in F_{2}\right\}  \notag \\
&=&\varnothing .  \notag
\end{eqnarray}

Taking into the account relations (\ref{B1}), (\ref{B6}), (\ref{B7}), the
consistency conditions (\ref{13_}), (\ref{14_}) and also that each $%
\mathfrak{M}_{(t_{1},...,t_{n})}$ is a finitely additive measure, we derive%
\begin{eqnarray}
&&\mathbb{M(}A_{1}\cup A_{2})  \TCItag{B8}  \label{B8} \\
&=&\mathfrak{M}_{(t_{1},...,t_{n})}(\{(\lambda _{1},...,\lambda _{n})\in
\Lambda _{t_{1}}\times \cdots \times \Lambda _{t_{n}}\mid (\lambda
_{i_{1}},...,\lambda _{i_{k}})\in F_{1}\text{ \ or \ }(\lambda
_{j_{1}},...,\lambda _{j_{m}})\in F_{2}\})  \notag \\
&=&\mathfrak{M}_{(t_{1},...,t_{n})}\left( \left\{ (\lambda _{1},...,\lambda
_{n})\in \Lambda _{t_{1}}\times \cdots \times \Lambda _{t_{n}}\mid (\lambda
_{i_{1}},...,\lambda _{i_{k}})\in F_{1}\right\} \right)  \notag \\
&&+\mathfrak{M}_{(t_{1},...,t_{n})}\left( \left\{ (\lambda _{1},...,\lambda
_{n})\in \Lambda _{t_{1}}\times \cdots \times \Lambda _{t_{n}}\mid (\lambda
_{j_{1}},...,\lambda _{j_{m}})\in F_{2}\right\} \right)  \notag \\
&=&\mathfrak{M}_{(t_{i_{1}},...,t_{i_{k}})}\left( F_{1}\right) +\mathfrak{M}%
_{(t_{j_{1}},...,t_{j_{m}})}\left( F_{2}\right)  \notag \\
&=&\mathbb{M(}A_{1})+\mathbb{M(}A_{2}).  \notag
\end{eqnarray}%
Hence, the normalized set function $\mathbb{M}$ on $\mathcal{A}_{\Lambda }$
defined by relation (\ref{B1}) is additive and is, therefore, a finitely
additive measure on $\mathcal{A}_{\Lambda }$, see remark 1.

Thus, we have proved that the set function $\mathbb{M}:\mathcal{A}_{\Lambda
}\rightarrow \mathcal{L}_{\mathcal{H}}$ defined by relation (\ref{B1})
constitutes a unique normalized finitely additive $\mathcal{L}_{\mathcal{H}}$%
-valued measure on the algebra $\mathcal{A}_{\Lambda }$ satisfying relation (%
\ref{19_}).

\section{Appendix C}

For a complex Hilbert space $\mathcal{H},$ let $(\Omega ,\mathcal{F}_{\Omega
})$ be specified in theorem 2. Then the representation%
\begin{eqnarray}
\frac{1}{n!}\mathrm{tr}[\rho \{\mathrm{P}_{X_{1}}(B_{1})\cdot \ldots \cdot 
\mathrm{P}_{X_{n}}(B_{n})\}_{\mathrm{sym}}] &=&\mu _{\rho }\left( \text{ }%
f_{X_{1}}^{-1}(B_{1})\cap \cdots \cap f_{X_{n}}^{-1}(B_{n})\right) , 
\TCItag{C1}  \label{C1} \\
B_{1} &\in &\mathcal{B}_{\mathrm{sp}X_{1}},...,B_{n}\in \mathcal{B}_{\mathrm{%
sp}X_{n}},  \notag
\end{eqnarray}%
holds for all states $\rho $ and an arbitrary finite number of mutually
non-equal quantum observables $X_{1},...,X_{n}$ on $\mathcal{H}.$

From (\ref{C1}) and the relations 
\begin{eqnarray}
\varphi \circ X &\equiv &\varphi (X):=\int \varphi (\lambda )\mathrm{P}_{X}(%
\mathrm{d}\lambda ),  \TCItag{C2}  \label{C2} \\
\mathrm{P}_{\varphi (X)}(B) &=&\mathrm{P}_{X}(\varphi ^{-1}(B)),\text{ \ \ }%
B\in \mathcal{B}_{\mathrm{sp}\varphi (X)},  \notag \\
\mathrm{sp}\varphi (X) &=&\varphi (\mathrm{sp}X),  \notag
\end{eqnarray}%
it follows that, for a Borel function $\varphi :\mathbb{R}\rightarrow 
\mathbb{R}$ and an observable $X_{1},$ the relation 
\begin{eqnarray}
&&\frac{1}{n!}\mathrm{tr}[\rho \{\mathrm{P}_{\varphi (X_{1})}(B_{1})\cdot 
\mathrm{P}_{X_{2}}(B_{2})\cdot \ldots \cdot \mathrm{P}_{X_{n}}(B_{n})\}_{%
\mathrm{sym}}]  \TCItag{C3}  \label{C3} \\
&=&\frac{1}{n!}\mathrm{tr}[\rho \{P_{X_{1}}(\varphi ^{-1}(B_{1}))\cdot
\ldots \cdot \mathrm{P}_{X_{n}}(B_{n})\}_{\mathrm{sym}}]  \notag \\
&=&\mu _{\rho }\left( \text{ }f_{X}^{-1}(\varphi ^{-1}(B_{1}))\cap \cdots
\cap f_{X_{n}}^{-1}(B_{n})\right)  \notag \\
&=&\mu _{\rho }\left( (\varphi \circ \text{ }f_{X}^{-1})(B_{1})\cap \cdots
\cap f_{Y_{n}}^{-1}(B_{n})\right)  \notag
\end{eqnarray}%
is valid for each Borel function $\varphi :\mathbb{R}\rightarrow \mathbb{R},$
all sets $B_{1}\in \mathcal{B}_{\mathrm{sp}\varphi (X)},...,B_{n}\in 
\mathcal{B}_{\mathrm{sp}X_{n}},$ all states $\rho $ and an arbitrary finite
number of mutually non-equal quantum observables $X_{2},,...,X_{n}$ on $%
\mathcal{H}$. By theorem 2, this proves property (i).

Further, let $\rho _{k}\mapsto \mu _{\rho _{k}},$ $k=1,...,K<\infty .$ From (%
\ref{27}) it follows 
\begin{eqnarray}
&&\frac{1}{n!}\mathrm{tr}[(\sum_{k}\alpha _{k}\rho _{k})\{\mathrm{P}%
_{X_{1}}(B_{1})\cdot \ldots \cdot \mathrm{P}_{X_{n}}(B_{n})\}_{\mathrm{sym}}]
\TCItag{C4}  \label{C4} \\
&=&(\sum_{k}\alpha _{k}\mu _{\rho _{k}})\left( \text{ }f_{X_{1}}^{-1}(B_{1})%
\cap \cdots \cap f_{X_{n}}^{-1}(B_{n})\right)  \notag
\end{eqnarray}%
for all $B_{1}\in \mathcal{B}_{\mathrm{sp}X_{1}},...,B_{n}\in \mathcal{B}_{%
\mathrm{sp}X_{n}}$ and each finite collection $\{X_{1},...,X_{n}\}$ of
quantum observables on $\mathcal{H}.$ By theorem 2, this proves property
(ii).\bigskip

\end{document}